\title{Reducing Nondeterministic Tree Automata by \\Adding Transitions}
\author{Ricardo Manuel de Oliveira Almeida
\institute{University of Edinburgh, United Kingdom}
}
\newcommand{\mytab}{\quad}
\newtheorem{definition}{Definition}[section]
\newtheorem{theorem}{Theorem}
\newtheorem{lemma}{Lemma}[section]
\newcommand{\A}{A}
\newcommand{\allcompletetrees}{\mathbb C}    
\newcommand{\alltrees}{\mathbb T}
\newcommand{\dom}{dom}
\newcommand{\drel}{R_\mathsf{d}}		
\newcommand{\ldrel}{\hat{R}_\mathsf{d}}		
\newcommand{\dwsimlarg}{\xsimlarg{\mathsf{dw}}}
\newcommand{\dwsim}{\xsim{\mathsf{dw}}}
\newcommand{\dwequiv}{\equiv^\mathsf{dw}}
\newcommand{\upequiv}{\equiv^\mathsf{up}}
\newcommand{\dwtraceinclusionlarg}{\xtraceinclusionlarg{\mathsf{dw}}}
\newcommand{\dwtraceinclusion}{\xincl{\mathsf{dw}}}
\newcommand{\f}{a}
\newcommand{\id}{{\it id}} 
\newcommand{\transasymkupsim}[2]{\prec^{#1\textrm{-}\mathsf{up}}\!\!(#2)}
\newcommand{\transkdwsim}[1]{\preceq^{#1\textrm{-}\mathsf{dw}}}
\newcommand{\transkdwsimlarg}[1]{\succeq^{#1\textrm{-}\mathsf{dw}}}
\newcommand{\transkupsim}[2]{\preceq^{#1\textrm{-}\mathsf{up}}\!\!(#2)}
\newcommand{\transkupsimlarg}[2]{\succeq^{#1\textrm{-}\mathsf{up}}\!\!(#2)}
\newcommand{\transasymkdwsim}[1]{\prec^{#1\textrm{-}\mathsf{dw}}}
\newcommand{\transTD}[4]{\langle{#1},{#2},\vect{#3}{#4}\rangle}	
\newcommand{\nat}{\mathbb{N}}
\newcommand{\node}{v}
\newcommand{\rank}{\#}
\newcommand{\run}{\pi} 
\newcommand{\spstate}{\psi}						
\newcommand{\strictdwsim}{\strictxsim{\mathsf{dw}}}
\newcommand{\strictupsim}[1]{\strictxsim{\mathsf{up}}\!\!\!\;\!(#1)}
\newcommand{\strictxsim}[1]{\sqsubset^{#1}}
\newcommand{\tickNO}{\times}
\newcommand{\tickOK}{\checkmark}  
\newcommand{\tree}{t}
\newcommand{\true}{\mathsf{TRUE}}
\newcommand{\upsim}[1]{\xsim{\mathsf{up}}\!\!\!\;\!\!(#1)}
\newcommand{\upsimlarg}[1]{\xsimlarg{\mathsf{up}}\!\!(#1)}
\newcommand{\uptraceinclusion}[1]{\xincl{\mathsf{up}}\!\!\!\;\!(#1)}
\newcommand{\uptraceinclusionlarg}[1]{\xincllarg{\mathsf{up}}\!\!\!\;\!(#1)}
\newcommand{\urel}{R_\mathsf{u}}		
\newcommand{\vect}[2]{#1_1\dotsc#1_{#2}}
\newcommand{\xincl}[1]{\subseteq^{#1}}
\newcommand{\xincllarg}[1]{\supseteq^{#1}}
\newcommand{\xsim}[1]{\sqsubseteq^{#1}}
\newcommand{\xsimlarg}[1]{\sqsupseteq^{#1}}
\newcommand{\xtraceinclusionlarg}[1]{\supseteq^{#1}}
\newcommand{\pspace}{\texttt{PSPACE}}
\newcommand{\exptime}{\texttt{EXPTIME}}
\newcommand{\rabit}{\texttt{Rabit}}
\newcommand{\libvata}{\texttt{libvata}}
\newcommand{\minotaut}{\texttt{minotaut}}
\newcommand{\level}{\mathit{level}}
\newcommand{\ignore}[1]{}
\newcommand{\ru}{{\it RU}}
\begin{document}
\maketitle

\begin{abstract}
We introduce saturation of nondeterministic tree automata, 
a technique that adds new transitions to an automaton while preserving its language. 
We implemented our algorithm on \minotaut\ 
- a module of the tree automata library \libvata\ that reduces the size of automata by merging states and removing superfluous transitions - 
and we show how saturation can make subsequent merge and transition-removal operations more effective. 
Thus we obtain a Ptime algorithm that reduces the size of tree automata even more than before. 
Additionally, 
we explore how \minotaut\ alone can play an important role when performing hard operations like complementation, 
allowing to both obtain smaller complement automata and lower computation times. 
We then show how saturation can extend this contribution even further. 
We tested our algorithms on a large collection of automata from applications of \libvata\ in shape analysis, 
and on different classes of randomly generated automata.
\end{abstract}

\section{Introduction}

Tree automata are a generalization of word automata to non-linear words (i.e., trees)~\cite{tata2008}.
They have many applications in 
model checking~\cite{abdulla:tree,Bouajjani:modelChecking2006}, 
term rewriting~\cite{tool:Autowrite}
and related areas of formal software verification, e.g., 
shape analysis \cite{Holik:shapeForest2013}.
Several software packages for manipulating tree automata have been developed, e.g.,
Timbuk \cite{tool:Timbuk}, Autowrite
\cite{tool:Autowrite}
and \libvata\ \cite{tool:libvata} 
(on which other verification tools, like Forester \cite{tool:forester}, are based).

For nondeterministic automata,
many questions about their languages are computationally hard. 
The language universality, equivalence and inclusion
problems are \pspace-complete for word automata and \exptime-complete for tree automata~\cite{tata2008}.
A common approach to solving many instances of the inclusion problem is via the computation
of different notions of simulation preorders that at the same time under-approximate language inclusion
and are computable in polynomial time~\cite{etessami:hierarchy2002,Holik:computingSim08}.
These simulation preorders thus offer a trade-off between computability and expressiveness.
Efficient reduction algorithms have been presented both for word automata~\cite{mayr:advanced2013} 
and for tree automata~\cite{lukas:framework2009,Almeida2016},
where language inclusion is witnessed by the membership of a pair of states in a simulation preorder.
In our paper, 
we focus on Heavy(x,y)~\cite{Almeida2016},
a polynomial-time algorithm for reducing tree automata,
in the sense of obtaining a smaller automaton with the same language,
though not necessarily with the absolute minimal number of states possible
(in general, as with word automata,
there is no unique nondeterministic automaton with the minimal possible number of states for a given language).
Heavy(x,y) is based on an intricate combination of transition pruning and state quotienting techniques for tree automata,
extending previous work on the words case~\cite{mayr:advanced2013}.
Transition pruning is based on the notion that certain transitions may be removed from the automaton because 'better'
ones remain.
The notion of 'better' is given by comparing the states at the endpoints of the two transitions w.r.t.\ suitable 
simulation preorders. 
The Heavy(x,y) algorithm yields substantially smaller and sparser 
(i.e., using fewer transitions per state and per symbol) automata than all previously known
reduction techniques,
and it is still fast enough to handle large instances.

We start by optimizing the computation of simulation preorders in Heavy(x,y).
This is done by identifying re-computations that can be skipped,
which yields generally faster computation times.
We then introduce the dual notion of transition pruning, 
in which transitions are added to the automaton if 'better' ones exist already.
This technique is known as transition saturation and it was previously defined for word 
automata~\cite{lorenzomayr:reductionAut2006}.
As in transition pruning,
this technique compares the source states of the two transitions w.r.t.\ a simulation $R_s$ on the states space,
and the target states of the transitions w.r.t.\ a simulation $R_t$.
If saturating an automaton with $R_s$ and $R_t$ preserves the language,
we say that $S(R_s,R_t)$ is good for saturation.
We provide a summary of all $S(R_s,R_t)$ we found to be or not to be good for saturation.

The motivation behind saturation is that it may allow for new merging of states and transition removal which 
were not possible by using Heavy alone.
Thus saturating an automaton which has been reduced with Heavy(x,y) and then reducing it again
might result in an even smaller automaton.
We perform an experimental evaluation to measure how much smaller,
on average,
automata become by interleaving reduction methods with transition saturation.
Our results indicate that generally one obtains automata with fewer states,
but on some cases with more transitions, than the ones obtained by Heavy(x,y) alone.

In general, 
one wishes to reduce automata in order to make them more efficient to handle in subsequent computations.
Thus, we present a second experimental evaluation showing that the complement automata are much smaller and 
faster to compute when 
the automata have previously been reduced with the techniques described above.

We implemented our algorithm as an extension of \minotaut\ (source code available \cite{tool:minotaut}), 
a module of the tree automata library \libvata~\cite{tool:libvata} where the Heavy algorithm is provided.
The experiments described above were performed on a large collection of automata 
from applications of \libvata\ in shape analysis, 
as well as on different classes of randomly generated tree automata.

\section{Preliminaries} \label{sec:preliminaires}

\paragraph{Trees and tree automata.} 

A \emph{ranked alphabet} $\Sigma$ is a set
of symbols together with a function $\rank : \Sigma \rightarrow \nat_0$. 
For $\sigma \in \Sigma$, $\rank{(\sigma)}$ is called the \emph{rank} of $\f$. 
We define a \emph{node} as a sequence in $\nat^*$.
For a node $\node \in \nat^*$, we define the $i$-th child of $v$ to be the node $vi$, for some $i \in \nat$.

Given a ranked alphabet $\Sigma$, 
a finite \emph{tree} over $\Sigma$ is defined as a partial mapping $\tree: \nat^* \rightarrow \Sigma$ such that 
for all $v \in \nat^*$ and $i \in \nat$, 
if $vi \in \dom(\tree)$ then 
\textbf{(1)} $\node\in\dom(\tree)$, and
\textbf{(2)} $\rank(\tree(\node))\geq i$.
Note that the number of children of a node $v$ 
may be smaller than $\#(t(v))$.
In this case we say that the node is \emph{open}. 
Nodes which have exactly $\#(t(v))$ children are called \emph{closed}.
Nodes which do not have any children are called \emph{leaves}.
A tree is closed if all its nodes are closed, otherwise it is open.
By $\allcompletetrees(\Sigma)$ we denote the set of all closed trees over $\Sigma$
and by $\alltrees(\Sigma)$ the set of all trees over $\Sigma$.

A finite nondeterministic \emph{top-down tree automaton} (TDTA) is a quadruple 
$\A = (\Sigma,Q,\delta, I)$ where $Q$ is a finite set of states, $I \subseteq Q$ is a
set of initial states, $\Sigma$ is a ranked alphabet, 
and $\delta \subseteq Q \times \Sigma\times Q^+ $ is the transition relation. 
A TDTA has an unique final state, which we represent by $\spstate$.
The transition rules satisfy that if $\langle q, \sigma, \spstate \rangle \in\delta$ then $\rank(\sigma) = 0$, 
and if $\langle q, \sigma, q_1 \ldots q_n \rangle \in \delta$ (with $n > 0$) then $\rank(\sigma) = n$.
Informally, a run of $\A$ reads an input tree top-down from the root, 
branching into sub-runs on subtrees as specified by the applied transition rules,
and it accepts it if every branch ends in $\spstate$.
Formally, a run of $\A$ over a tree $\tree \in \alltrees(\Sigma)$ (or a $\tree$-run in $\A$) 
is a partial mapping $\run : \nat^*\rightarrow Q$ such that $\node\in\dom(\run)$ iff 
either $\node\in\dom(\tree)$ or $\node = \node'i$ where $\node'\in\dom(\tree)$ and $i \leq \rank(\tree(\node'))$.  
Further, for every $\node\in\dom(\tree)$, there exists either
\textbf{a)} a rule $\langle q, a, \spstate \rangle$ such that $q = \run(\node)$ and $\sigma = \tree(\node)$,
or \textbf{b)} a rule $\transTD q \sigma q n$ such that $q = \run(\node)$, 
$\sigma = \tree(\node)$, and $q_i = \run(\node i)$ for each $i:1\leq i\leq \rank(\sigma)$.
A \emph{leaf of a run} $\run$ on $\tree$ is a node $\node\in\dom(\run)$ 
such that $\node i\in\dom(\run)$ for no $i \in\nat$.

We write $\tree\stackrel{\run}{\Longrightarrow} q$ to denote
that $\run$ is a $t$-run of $\A$ such that $\run(\epsilon)=q$. 
A run $\run$ is accepting if $\tree\stackrel{\run}{\Longrightarrow} q \in I$.
The \emph{downward language of a state} $q$ in $\A$ is defined by
$D_A(q)=\{\tree\in \allcompletetrees(\Sigma) \mid \tree \stackrel{\run}{\Longrightarrow} q, \mbox{for some run } \run \}$, 
while the \emph{language} of $\A$ is defined by $L(\A)=\bigcup_{q\in I}D_A(q)$. 
We sometimes write simply $\A$ to refer to its language.

\paragraph{Downward and upward relations.} 

The behaviour of states in TDTA can be compared by semantic
preorders (and their induced equivalences), based on the upward- or
downward behaviour of the automaton from these states.

Ordinary downward simulation on tree automata 
can be characterized by a game between two players, Spoiler and Duplicator. 
Given a pair of states $(q,r)$, 
Spoiler wants to show that $(q,r)$ is not contained in the simulation preorder relation, 
while Duplicator has the opposite goal.
Starting in the initial configuration $(q,r)$, 
Spoiler chooses a transition $q \stackrel{\sigma}{\longrightarrow} \langle q_1 \ldots q_n \rangle$,
where $n = \#(\sigma)$,
and Duplicator must imitate it \emph{stepwise} by 
choosing a transition with the same symbol $r \stackrel{\sigma}{\longrightarrow} \langle r_1 \ldots r_n \rangle$. 
This yields $n$ new configurations $(q_1,r_1), \ldots, (q_n,r_n)$ from which the game continues independently.
If a player ever cannot make a move then the other player wins. 
Duplicator wins every infinite game. Simulation holds iff Duplicator wins. 

A tree branches as one goes downward, but `joins in' side branches as one goes upward.
Therefore a comparison of the upward behaviour of states depends also
on the joining side branches as one goes upward in the tree.
Thus upward simulation is only defined {\em relative} to a given
other relation $R$ that compares the downward behaviour of states `joining in' from
the sides \cite{Holik:computingSim08}. 
One speaks, e.g., of upward simulation {\em of} $R$.
Thus in the ordinary upward simulation game, starting in the initial configuration $(q,r)$, 
Spoiler chooses a transition $q' \stackrel{\sigma}{\longrightarrow} \langle q_1 \ldots q_n \rangle$,
where $q=q_i$ for some $i$ and $n = \#(\sigma)$,
and Duplicator must imitate it \emph{stepwise} by 
choosing a transition with the same symbol $r' \stackrel{\sigma}{\longrightarrow} \langle r_1 \ldots r_n \rangle$,
where $r=r_i$, and such that 
1) $q_j R r_j$, for every $j \neq i$, and
2) $q \in I \Longrightarrow r \in I $.
The game continues from the configuration $(q',r')$,
and Spoiler wins if Duplicator ever cannot respond to a move,
otherwise Duplicator wins.

While in ordinary downward simulation (resp., upward simulation w.r.t.\ $R$) 
Duplicator only knows Spoiler's very next step,
in downward $k$-lookahead simulation (resp., upward $k$-lookahead simulation w.r.t.\ $R$) 
Duplicator knows Spoiler's next $k$ steps in advance 
(unless Spoiler's move ends in a deadlocked state - i.e., a state with no transitions).
In the case where Duplicator knows {\em all} steps of Spoiler in the entire downward simulation game in advance 
(i.e., $k=\infty$),
we talk of downward trace/language inclusion (resp., upward trace inclusion w.r.t.\ $R$).
As the parameter $k$ increases, 
the $k$-lookahead simulation relation becomes larger and thus approximates the respective 
trace inclusion relation better and better.

The downward/upward $k$-lookahead simulation preorder 
(denoted $\transkdwsim{k}$/$\transkupsim{k}{R}$, or just $\dwsim$/$\upsim{R}$ in the ordinary case) 
is the set of all pairs $(p,q)$ for which Duplicator has a winning strategy in the respective game.
For the downward/upward trace inclusion preorder we write $\dwtraceinclusion$/$\uptraceinclusion{R}$.

Downward/upward $k$-lookahead simulation is PTIME-computable for every fixed $k$ and a good
under-approximation of the respective trace inclusion
(which is EXPTIME-complete in the downward case \cite{tata2008}, 
and PSPACE-complete for $R$ = $\id$ in the upward case).

\paragraph{Transition pruning and state quotienting.} 

Given a TDTA $\A = (\Sigma,Q,\delta, I)$,
certain transitions may be pruned without changing the language, 
because `better' ones remain.
Given a strict partial order $P \subseteq \delta \times\delta$ on the set of transitions, 
the pruned automaton is defined as ${\it Prune}(\A,P) = (\Sigma,Q,\delta', I)$ where
$\delta' = \{ (p,\sigma,r) \in \delta \mid \nexists (p',\sigma,r') \in \delta \ldotp (p,\sigma,r) \, P \, (p',\sigma,r') \} $.
I.e., if $t\, P \,t'$ then $t$ may be pruned because $t'$ is `better' than $t$. 
${\it Prune}(A,P)$ is unique and transitions are removed in parallel without re-computing $P$. 
Trivially, $L(Prune(\A,P)) \subseteq L(\A)$.
If $L({\it Prune}(A,P)) = L(A)$ also holds we say that $P$ is \emph{good for pruning} (GFP).

We obtain GFP relations by comparing the endpoints of transitions over the same symbol $\sigma \in \Sigma$.
Given two binary relations $\urel$ and $\drel$ on $Q$, we define 
$P(\urel,\drel) = \{ (\langle p, \sigma, r_1 \cdots r_n \rangle , \langle p', \sigma, r'_1 \cdots r'_n \rangle) \, \mid \, p \mathrel{\urel} p' \mbox{ and } (r_1 \cdots r_n) \mathrel{\ldrel} (r'_1 \cdots r'_n) \},$
where $\ldrel$ is a suitable lifting of $\drel \subseteq Q \times Q$ to $\ldrel \subseteq Q^n \times Q^n$:
if $\drel$ is some strict partial order $<_\mathsf{d}$,
then $\ldrel$ is a binary relation $\hat{<}_\mathsf{d}$ s.t.
1) $\forall_{1 \leq i \leq n} \ldotp r_i \leq_\mathsf{d} r'_i$, and
2) $\exists_{1 \leq i \leq n} \ldotp r_i <_\mathsf{d} r'_i$;
if $\drel$ is a non-strict partial order $\leq_\mathsf{d}$, then only condition 1) applies.
The relations $\urel,\drel$ are chosen such that 
$P(\urel,\drel) \subseteq \delta \times \delta$ 
is a strict partial order
(i.e., of the two relations $\urel$ and $\drel$, one must be a strict partial order)
that is GFP; see the algorithm Heavy below.

Another method for reducing the size of automata is state quotienting. 
Given a suitable equivalence on the set of states, 
each equivalence class is collapsed into just one state.
A preorder $\sqsubseteq$ induces an equivalence relation 
$\equiv \;:=\; \sqsubseteq \!\cap\! \sqsupseteq$.
Given $q \in Q$, $[q]$ denotes its equivalence class w.r.t.\ $\equiv$.
For $P \subseteq Q$, $[P]$ denotes the set of equivalence classes 
$[P] = \{[p]\,|\, p \in P\}$.
The quotient automaton is defined as
$\A/\!\equiv \; := \; (\Sigma, [Q], \delta_{A/\!\equiv}, [I])$,
where $\delta_{A/\!\equiv} = \{ \langle [q],\sigma,[q_1]\ldots[q_n] \rangle
\mid \langle q,\sigma,q_1\ldots q_n \rangle \in \delta_A \}$.
Trivially, $L(A) \subseteq L(A/\!\!\equiv)$.
If $L(A) = L(A/\!\!\equiv)$ also holds, 
$\equiv$ is said to be \emph{good for quotienting} (GFQ).

\paragraph{The Heavy algorithm.} 

Here we describe Heavy(x,y)~\cite{Almeida2016},
a tree automata reduction algorithm based on transition pruning and 
state quotienting.
The parameters $x,y \ge 1$ describe the lookahead for the used downward/upward lookahead simulations, respectively,
where larger values yield better reduction but are harder to compute.
The algorithm is polynomial for fixed $x,y$,
and doubly exponential in $x$ (due to the downward branching of the tree) and single exponential in $y$ otherwise.
Let ${\it Op}(x,y)$ be the following sequence of operations on tree automata,
where ${\it RU}$ stands for removing useless states 
(i.e., states that cannot be reached from any initial state or from which no tree can be accepted):
$\ru$, 
quotienting with $\transkdwsim{x}$, 
pruning with $P(\id,\transasymkdwsim{x})$,
$\ru$,
quotienting with $\transkupsim{y}{\id}$,
pruning with $P(\transasymkupsim{y}{\id}, \id)$,
pruning with $P(\strictupsim{\id},\transkdwsim{x})$,
$\ru$,
quotienting with $\transkupsim{y}{\id}$,
pruning with $P(\transkupsim{y}{\dwsim},\strictdwsim)$,
$\ru$.
These operations are language preserving, 
since the used relations are GFP/GFQ \cite{Almeida2016}.

The algorithm Heavy(1,1) just iterates ${\it Op}(1,1)$ 
until a fixpoint is reached.
The general algorithm Heavy(x,y) does not iterate ${\it Op}(x,y)$, 
but uses a double loop: 
it iterates the sequence Heavy(1,1)${\it Op}(x,y)$ until a fixpoint is reached.

The Heavy algorithm is provided in the \minotaut\ library~\cite{tool:minotaut},
making use of \libvata's efficient computation of ordinary simulation
(for a description of \minotaut's implementation of simulation with larger lookaheads see Section~\ref{sec:efficient}).
Heavy behaves well in practice, 
significantly reducing both automata of program verification provenience 
and randomly generated automata~\cite{Almeida2016}.

\section{Efficient Computation of Lookahead Simulations}	\label{sec:efficient}

We performed some optimizations on the computation of the maximal downward lookahead simulation used 
in Heavy(x,y).
In the following we describe the key aspects of the computation in terms of a game between Spoiler and Duplicator.
(Upward simulation is similar but simpler, since the tree branches downward.) 

\paragraph{Fixpoint iteration with incremental moves.} 
We represent binary relations over $Q$ as boolean matrices of dimension
$|Q|\times|Q|$.
Starting with a matrix $W$ in which all entries are set to $\true$,
the algorithm consists of a downward refinement loop of $W$
that converges to the maximal downward $k$-lookahead simulation.
In each iteration of the refinement loop, for each pair $p,q$ where $W[p][q]$ is still $\true$:
\begin{itemize}
 \item Spoiler tries an attack ${\it atk}$ consisting of a possible move from $p$ 
 of some depth $d \le k$.
 Each such attack is built incrementally, 
 for $d=1,2,\dots,k$, 
 in order to give Duplicator a chance to respond already to a prefix of ${\it atk}$ of depth $<k$.
\item 
 Duplicator then attempts to defend against the given attack of depth $d$,
 by finding a matching move ${\it def}$ from $q$ by the same symbols s.t.\
 every leaf-state in ${\it def}$ is in relation $W$ with the corresponding state in
 ${\it atk}$. (Duplicator's search is done in depth-first mode.)
 If successful, 
 Duplicator declares victory against this particular (prefix of an) attack and Spoiler tries a new one, 
 since extending the current one to a higher depth is pointless.
 If unsuccessful and $d<k$, 
 Spoiler builds an attack of the next depth level $d+1$,
 by extending $atk$ with one new transition from each of its leaf-states.
 The extra information might enable Duplicator to find a successful defence then.
\item
 Duplicator fails if he could not defend against an attack $atk$ of the maximal depth, 
 either where $atk$ has depth $d=k$
 or $d < k$ but $atk$ cannot be extended any more due to all its leaf-states having no outgoing transitions.
\item
 If Duplicator could defend against every attack (or some prefix of it) by Spoiler then $W[p][q]$ stays true, for now.
\item 
 In the worst case, for each Spoiler's attack of depth $d$, 
 Duplicator must search through all defences of depth up-to $d$, 
 but often Duplicator wins sooner.
\item
 Similarly, in the worst case, Spoiler needs to try all possible attacks of depth $k$, 
 but often Duplicator already wins against prefixes of some depth $d<k$.
\end{itemize} 
Since the outcome of a local game depends on the values of $W$, 
the refinement loop might converge only after several iterations.
The reached fixpoint represents a relation that is generally not transitive (for $k>1$), 
but its transitive closure is the required maximal downward $k$-lookahead simulation preorder $\transkdwsim{k}$.

\paragraph{An Optimization Based on Pre-Refinement.}
Following an approach implemented in \rabit~\cite{tool:rabiturl} for word automata, 
we under-approximate non-simulation as follows.
If there exists a tree of bounded depth $d$ that can be read from state $p$ but not from state $q$, 
then the pair $(p,q)$ cannot be in $k$-lookahead simulation for any $k$.
The pre-refinement step iterates through all pairs $(p,q)$ and sets $W[p][q]$ to $\texttt{false}$ if such a tree is
found witnessing non-simulation.
Our experiments show that, for most automata samples,
running a pre-refinement with some modest depth $d$ suffices to speed up the 
$k$-lookahead downward simulation computation.

We now present an optimization that allows to
compute lookahead simulation faster.
The idea is that attacks which are \emph{good} (i.e., successful) or \emph{bad} (i.e., unsuccessful)
may be remembered to skip unnecessary re-computations.

\paragraph{Semi-global caching of Spoiler's attacks.}
An attack is seen as \emph{good} or \emph{bad} within the scope of the \emph{whole game}.
Consider the game configuration 
$(p_1,q_1)$ in Figure~\ref{fig:opt_automata}. 
Although $q_1$ can read all trees of depth $3$ that $p_1$ can read,
there are \emph{good} attacks from $p_2$ both against $q_2$ and against $q_3$.
Duplicator will find and store these if, when defending against the attack $ac(e,e)$,
he first tries the transition to $q_2$ (which can only read $d$), 
or when defending against $ad(e,e)$ he first tries the transition to $q_3$
(which can only read $c$).
After trying possibly all attempts,
Duplicator is able to defend against the attack and Spoiler now tries the $b$-transition 
from $p_1$ to $p_2$.
However, 
all possible sub-attacks are now the same,
which makes Duplicator announce defeat on them immediately without any exploration.

In Appendix~\ref{appendix:caching} two different ways of performing this caching of Spoiler's attacks can be found.
The three versions present a trade-off between expressiveness and space required to encode attacks.
Our tests indicate that the semi-global version indeed speeds up the computation on automata 
with high transition overlaps (i.e., where many states are shared by different transitions).

\begin{figure}[htbp]
 \begin{center}
 \begin{Large}
  \scalebox{.50}{\input{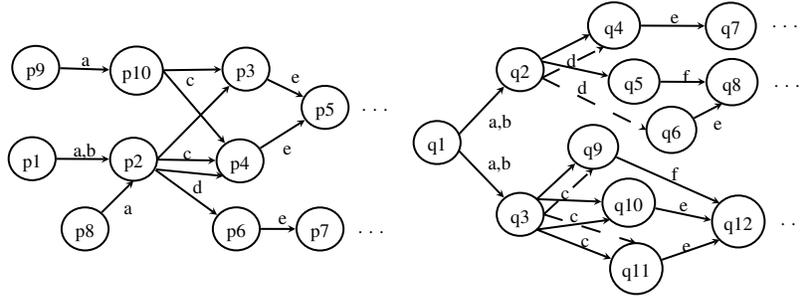}} 
 \end{Large}
 \end{center}
 \caption{For $W = \{(p_5,q_7),(p_5,q_8),(p_7,q_8),(p_7,q_{12})\}$, 
 all versions of the optimization allow some attacks to be skipped when computing the $3$-lookahead downward simulation.}
 \label{fig:opt_automata}
\end{figure}

\section{Saturation of Tree Automata}

In Section~\ref{sec:preliminaires}, 
we described the transition pruning technique,
which removes a transition if a 'better' one remains.
In this section,
we introduce its dual notion, saturation,
which adds a transition if a 'better' one exists already.
The motivation behind saturation is to pave the way for further reductions
when the Heavy algorithm has reached a fixpoint on the automaton
(see Section~\ref{sec:results}).
Saturation has been defined for the words case before~\cite{lorenzomayr:reductionAut2006},
here we apply it to tree automata.

\vspace{0.2cm}
\begin{definition}
 Let $\A = (\Sigma,Q,\delta,I)$ be a TDTA, $\Delta = Q \times \Sigma \times Q^+$ 
 and $S \subseteq \Delta \times \Delta$ a reflexive binary relation on $\Delta$.
 The \emph{S-saturated automaton} is defined as $Sat(\A,S) := (\Sigma,Q,\delta_S,I)$, where
 $$ \delta_S = \{ \langle p',a,q_1' \ldots q_{\#(a)}' \rangle \!\in\! \Delta \mid \exists \langle p,a,q_1 \ldots q_{\#(a)} \rangle \!\in\! \delta \,\cdot\, \langle p',a,q_1' \ldots q_{\#(a)}' \rangle \,S\, \langle p,a,q_1 \ldots q_{\#(a)} \rangle \}. $$
\end{definition}

Since $S$ is reflexive, any transition in the initial automaton is preserved and so 
$\A \subseteq Sat(\A,S)$.
When the converse inclusion also holds,
we say that $S$ is \emph{good for saturation} (GFS).
Note that the GFS property is downward closed in the space of reflexive relations, i.e.,
if $R$ is GFS and $\id \!\subseteq\! R' \!\subseteq\! R$,
then $R'$ too is GFS. 
(or if $R'$ is not GFS, then $R$ too is not GFS).

Given two binary relations $R_s$ and $R_t$ on $Q$,
we define 
 $S(R_s,R_t) = \{ (\langle p, \sigma, r_1 \cdots r_n \rangle , \langle p', \sigma, r'_1 \cdots r'_n \rangle) \, \mid \, p R_s p' \mbox{ and } (r_1 \cdots r_n) \hat{R_t} (r'_1 \cdots r'_n) \},$
where $\hat{R_t}$ is the standard lifting of $R_t \subseteq Q \times Q$ to $\hat{R_t} \subseteq Q^n \times Q^n$.
Informally,
a transition $t'$ is added to the automaton if there exists already a transition $t$ 
s.t.\ its source state is $R_s$-larger than the source state of $t'$,
and its target states are $\hat{R_t}$-larger than the target states of $t'$.
Theorem~\ref{thrm:dw_trace} below proves that $S(\dwtraceinclusionlarg, \dwtraceinclusion)$ is GFS.
Since the GFS property is downward closed,
it follows that 
$S(\dwtraceinclusionlarg, \dwsim)$,
$S(\dwtraceinclusionlarg, \id)$,
$S(\dwsimlarg, \dwtraceinclusion)$,
$S(\dwsimlarg, \dwsimlarg)$,
$S(\dwsimlarg, \id)$,
$S(\id, \dwtraceinclusion)$
and $S(\id, \dwsim)$
too are GFS.
In Theorem~\ref{thrm:up_trace} (see Appendix~\ref{appendix:proofs} for a proof), 
we prove that $S(\uptraceinclusion{id},\uptraceinclusionlarg{id})$ is GFS. 
Thus it follows that
$S(\uptraceinclusion{id},\upsimlarg{id})$,
$S(\uptraceinclusion{id},\id)$,
$S(\upsim{id},\uptraceinclusionlarg{id})$,
$S(\upsim{id},\upsimlarg{id})$,
$S(\upsim{id},\id)$,
$S(\id,\uptraceinclusionlarg{id})$
and $S(\id,\upsimlarg{id})$
too are GFS.

\vspace{0.2cm}
\begin{theorem} \label{thrm:dw_trace}
 $S(\dwtraceinclusionlarg, \dwtraceinclusion)$ is GFS.
\end{theorem}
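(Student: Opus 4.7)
The plan is to prove the stronger pointwise inclusion $D_{Sat(\A,S)}(q) \subseteq D_A(q)$ for every $q \in Q$, where $S = S(\dwtraceinclusionlarg, \dwtraceinclusion)$. Since $\A \subseteq Sat(\A,S)$ holds for any reflexive $S$, taking the union over $q \in I$ of the pointwise inclusion then yields $L(\A) = L(Sat(\A,S))$, establishing that $S$ is GFS.

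I would prove the pointwise inclusion by structural induction on a tree $t \in D_{Sat(\A,S)}(q)$. Fix an accepting $Sat(\A,S)$-run $\run$ on $t$ with $\run(\epsilon)=q$; let $\sigma = t(\epsilon)$, $n = \rank(\sigma)$, and $q_i = \run(i)$, so that the top transition used is $\langle q,\sigma,q_1 \cdots q_n\rangle \in \delta_S$ and each subtree $t_i$ at the $i$-th child satisfies $t_i \in D_{Sat(\A,S)}(q_i)$. By the induction hypothesis applied to each subtree, $t_i \in D_A(q_i)$.

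Now I case-split on the origin of the top transition. If $\langle q,\sigma,q_1 \cdots q_n\rangle \in \delta$ already, then combining this transition with the $\A$-runs witnessing $t_i \in D_A(q_i)$ directly gives $t \in D_A(q)$. Otherwise, by the definition of $\delta_S$, there exists a witness $\langle p,\sigma,r_1 \cdots r_n\rangle \in \delta$ with $q \dwtraceinclusionlarg p$ and $q_i \dwtraceinclusion r_i$ for every $i$. The latter unwinds to $D_A(q_i) \subseteq D_A(r_i)$, so $t_i \in D_A(r_i)$; combining with $\langle p,\sigma,r_1 \cdots r_n\rangle$ yields $t \in D_A(p)$; and $q \dwtraceinclusionlarg p$ means $D_A(p) \subseteq D_A(q)$, so finally $t \in D_A(q)$. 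The leaf base case ($n=0$, target $\spstate$) is a degenerate instance of the same argument, with no subtree IH to invoke.

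The only real subtlety is to choose the strengthened inductive statement rather than attacking $L(Sat(\A,S)) \subseteq L(\A)$ directly: the existential over initial states in $L$ does not compose with subtree reasoning, whereas the pointwise form threads cleanly through the recursion. The pairing $(R_s,R_t) = (\dwtraceinclusionlarg, \dwtraceinclusion)$ is exactly what makes this work: at each recursive step, the target-side direction lets us shrink the downward languages of the children as we switch to the witness transition, and then the source-side direction lets us grow the downward language of the source back up to cover $q$, closing the induction.
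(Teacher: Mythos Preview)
Your proposal is correct and follows essentially the same route as the paper: both strengthen the goal to the pointwise inclusion $D_{A_S}(q)\subseteq D_A(q)$ for every state $q$ and prove it by induction on the tree (the paper phrases it as induction on height, you as structural induction), using the witness transition from $\delta$ together with $\dwtraceinclusion$ on the children and $\dwtraceinclusionlarg$ on the source to rebuild an $\A$-run at $q$. Your case split on whether the top transition already lies in $\delta$ is harmless but unnecessary, since reflexivity of $S$ makes the ``already in $\delta$'' case a special instance of the witness argument; also, the word ``accepting'' for the run is a slight misnomer here (you only need $\run(\epsilon)=q$, not $q\in I$), but this does not affect the argument.
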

\begin{proof}
 Let $\A$ be a TDTA and $A_S = Sat(\A,S(\dwtraceinclusionlarg, \dwtraceinclusion))$.
 We will use induction on $n \geq 1$ to show that for every tree $\tree$ of height $n$ and 
 every run $\run_S$ of $\A_S$ s.t.\ $\tree \stackrel{\run_S}{\Longrightarrow} p$, for some state $p$,
 there exists a run $\run$ of $\A$ s.t.\ $\tree \stackrel{\run}{\Longrightarrow} p$.
 This shows, in particular, that $A_S \subseteq A$.
 
 In the base case $n=1$,
 $\tree$ is a leaf-node $\sigma$,
 for some $\sigma \in \Sigma$. 
 Thus for every run $\run_S$ of $\A_S$
 such that $\tree \stackrel{\run_S}{\Longrightarrow} p$,
 for some state $p$,
 there exists $\langle p, \sigma, \spstate \rangle \in \delta_S$.
 By the definition of $\delta_S$,
 there exists $\langle q, \sigma, \spstate \rangle \in \delta$ s.t.\ $q \dwtraceinclusion p$.
 Consequently, there exists a run $\run$ in $\A$ s.t.\ $\tree \stackrel{\run}{\Longrightarrow} q$.
 By $q \dwtraceinclusion p$,
 there also exists a run $\run'$ of $\A$ s.t.\ $\tree \stackrel{\run'}{\Longrightarrow} p$.
 
 For the induction step,
 let $\tree$ be a tree of height $n>1$ and $a$ its root symbol.
 Thus for every run $\run_S$ of $\A_S$ 
 s.t.\ $\tree \stackrel{\run_S}{\Longrightarrow} p$, for some state $p$,
 there exist $\langle p, a, q_1 \ldots q_{\#(a)} \rangle \in \delta_S$
 and, for each $i:(1 \leq i \leq \#(a))$, 
 a run $\run_{S_i}$ of $\A_S$ s.t.\ $\tree_i \stackrel{\run_{S_i}}{\Longrightarrow} q_i$.
 By the definition of $\delta_S$,
 there exists $\langle p',a,q'_1 \ldots q'_{\#(a)} \rangle \in \delta$
 s.t.\ $p' \dwtraceinclusion p$ and, for every $i:(1 \leq i \leq \#(a))$, 
 $q'_i \dwtraceinclusionlarg q_i$.
 Applying the induction hypothesis to each of the subtrees $\tree_i$,
 we know that for every $\tree_i$-run $\run_{S_i}$ of $\A_S$ ending in $q_i$ 
 there is also a $\tree_i$-run $\run_i$ of $\A$ ending in $q_i$.
 And since $q'_i \dwtraceinclusionlarg q_i$ for every $i:(1 \leq i \leq \#(a))$,
 for each $\tree_i$ there exists a run $\run'_i$ of $\A$ 
 s.t.\ $\tree_i \stackrel{\run'_i}{\Longrightarrow} q'_i$.
 Since there exists $\langle p', a, q'_1 \ldots q'_{\#(a)} \rangle \in \delta$,
 we obtain that there is a run $\run''$ of $\A$ s.t.\ $\tree \stackrel{\run''}{\Longrightarrow} p'$.
 From $p' \dwtraceinclusion p$, 
 it follows that there is also a run $\run'''$ of $\A$ 
 s.t.\ $\tree \stackrel{\run'''}{\Longrightarrow} p$. 
\end{proof}

\begin{theorem} \label{thrm:up_trace}
 $S(\uptraceinclusion{id},\uptraceinclusionlarg{id})$ is GFS.
 \label{theorem:up_trace_inc_gfs}
\end{theorem}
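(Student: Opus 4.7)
The plan is to prove $L(\A_S) \subseteq L(\A)$, where $\A_S := Sat(\A, S(\uptraceinclusion{\id}, \uptraceinclusionlarg{\id}))$; the reverse inclusion is immediate from reflexivity of the relation. My strategy is to dualise the argument of Theorem~\ref{thrm:dw_trace}: whereas the downward case uses a bottom-up induction on subtree height and exploits $\dwtraceinclusion$ to re-target the root state of a subtree-run, the upward case calls for a top-down argument that uses $\uptraceinclusion{\id}$ to re-target the endpoint of an $\A$-run on the ``upper context'' of a node, keeping side-children states along the spine untouched (as forced by $R = \id$ in the upward simulation game).

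I would fix an $\A_S$-accepting run $\run_S$ on a tree $t$ and construct an $\A$-accepting run by a lexicographic induction on $(|t|, N)$, where $N$ is the number of transitions in $\run_S$ that lie in $\delta_S \setminus \delta$. If $N = 0$ then $\run_S$ is already an $\A$-run. Otherwise, isolate a node $v$ of maximal depth where $\run_S$ uses a new transition $\langle p, a, q_1, \ldots, q_n \rangle \in \delta_S \setminus \delta$, with witness $\langle p', a, q'_1, \ldots, q'_n \rangle \in \delta$ satisfying $p \uptraceinclusion{\id} p'$ and $q'_i \uptraceinclusion{\id} q_i$. First apply the outer induction hypothesis (on $|t|$) to each side subtree attached to the spine from $\epsilon$ to $v$, converting its $\A_S$-sub-run into an $\A$-sub-run with the same root state. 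The spine from $\epsilon$ to $v$ together with these converted side subtrees then forms an $\A$-accepting upper context ending at $p$ at $v$; applying $p \uptraceinclusion{\id} p'$ re-targets this context to end at $p'$ using only $\delta$-transitions, and the witness transition at $v$ then legitimately descends to children labels $q'_1, \ldots, q'_n$, eliminating the new transition at $v$ and strictly decreasing $N$.

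The main obstacle is reconciling the new child states $q'_j$ with the sub-runs of $\run_S$ below $v$, which begin at $q_j$. Since $q'_j \uptraceinclusion{\id} q_j$ speaks about upward contexts rather than downward subtree languages, these sub-runs cannot be directly re-targeted from $q_j$ to $q'_j$; the relation merely tells us that the $\A$-upper-context now terminating at $q'_j$ at $vj$ also admits a different $\A$-upper-context terminating at $q_j$, obtained by swapping the $\delta$-transition used at $v$ for another with $j$-th child $q_j$. Carefully chaining these per-child context-swaps so that the combined result remains an $\A$-accepting run (without disturbing already-fixed side-children states elsewhere along the spine) is the delicate bookkeeping step; the maximal-depth choice of $v$, which guarantees that all transitions used strictly below $v$ are already in $\delta$, combined with the outer induction on $|t|$ used to clean the spine's side subtrees, should make this manageable and keep the lexicographic measure strictly decreasing.
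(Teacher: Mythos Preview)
Your approach has a genuine gap. You propose applying the outer induction hypothesis on $|t|$ to each side subtree along the spine, obtaining an $\A$-sub-run \emph{with the same root state}. But your induction statement concerns \emph{accepting} runs, i.e.\ runs starting at an initial state; the root of a side subtree is in general not initial, so the hypothesis does not apply. The strengthening you would need---that $D_{\A_S}(r)\subseteq D_{\A}(r)$ for every state $r$---is actually false for upward saturation. For a counterexample take initial state $s$, further states $p,p',q,q'$, unary symbols $a,b$, nullary symbols $c,d$, and $\delta=\{\langle s,a,p\rangle,\langle s,a,p'\rangle,\langle p,b,q\rangle,\langle p',b,q'\rangle,\langle q,c,\spstate\rangle,\langle q',d,\spstate\rangle\}$. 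Then $q\uptraceinclusion{\id} q'$ (both have the unique upward trace $b,a$ to $s$, with no side states), so saturation adds $\langle p,b,q'\rangle$, whence $b(d)\in D_{\A_S}(p)\setminus D_{\A}(p)$. This is exactly where the hoped-for duality with Theorem~\ref{thrm:dw_trace} breaks down: downward saturation preserves every downward language (and that is what the proof of Theorem~\ref{thrm:dw_trace} actually establishes), but upward saturation does not. There is also a second, related problem: with $v$ chosen at \emph{maximal} depth, the spine from $\epsilon$ to $v$ may itself use transitions of $\delta_S\setminus\delta$, so you have no $\A$-upward path from $p$ on which to invoke $p\uptraceinclusion{\id} p'$ (the relation is computed in $\A$, not $\A_S$); cleaning side subtrees would not repair the spine even if it were possible.

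Both problems disappear if you go top-down rather than bottom-up: choose $v$ at \emph{minimal} depth among nodes using a new transition. Then the spine above $v$ already lies in $\delta$, no side-subtree conversion is needed at all, and after swapping the spine to end at $p'$ and extending by the witness transition you can chain the relations $q'_j\uptraceinclusion{\id} q_j$ one child at a time---each such swap preserves the remaining siblings' states because the parameter is $\id$---to restore the original children $q_1,\dots,q_n$; the $\A_S$-sub-runs of $\run_S$ below then re-attach unchanged and $N$ strictly decreases. The paper's proof is precisely this top-down argument, organised as a level-by-level induction with the invariant that the states at the current level coincide with those of the original run; the per-child chaining you describe informally is isolated there as Lemma~\ref{lemma:uptraceinclusion}.
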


The counterexample in Fig.~\ref{fig:GFS_counterex_dwAndup} shows that
$S(\dwequiv, \upequiv\!\!\!(R))$ is not GFS for any relation $R \!\subseteq\! Q \!\times\! Q$.
The remaining counterexamples can be found in Appendix~\ref{appendix:proofs}:
\begin{itemize}
 \item Figure~\ref{fig:GFS_counterex_idAndupofdw}
shows that $S(\id, \upequiv\!\!\!(\dwequiv))$ is not GFS.
 \item Figure~\ref{fig:GFS_counterex_upofdwAndid} shows that $S(\upequiv\!\!(\dwequiv), \id)$ is not GFS.
 \item Figure~\ref{fig:GFS_counterex_upAnddw} is inspired by an example for a 
similar result for linear trees (i.e., words)~\cite{lorenzomayr:reductionAut2006}.
It shows that $S(\upequiv\!\!\!(R), \dwequiv)$ is not GFS for any relation $R \subseteq Q \times Q$.
\end{itemize}

\begin{small}
\begin{figure}[htbp]
 \begin{center}
 \begin{large}
  \scalebox{0.7}{\input{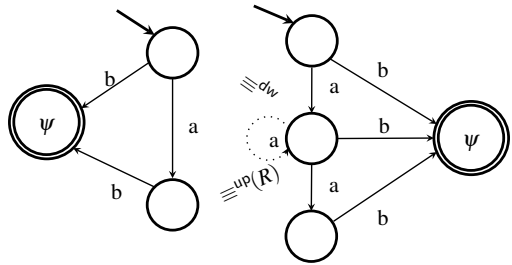}}
 \end{large}
 \end{center}
 \caption{$S(\dwequiv, \upequiv\!\!\!(R))$ is not GFS for any relation $R \subseteq Q \!\times\! Q$: 
 if we add the dotted transition, 
 the linear tree $aaab$ is now accepted.
 \!The symbol $b$ has rank $0$ and $a$ rank $1$.}
 \label{fig:GFS_counterex_dwAndup}
\end{figure}
\end{small}

In Figure~\ref{fig:GFS_relations} we present a table that summarizes these results.
The negative results follow from the counterexamples given and the fact that the GFS property is downward closed.

\begin{figure}[htbp]
 \begin{center}
      \begin{tabular}{c|ccccccc}
	  $S$	   					& $\id$		&	$\dwsim$	& 	$\dwtraceinclusion$	&	$\upsimlarg{\id}$ & 	$\uptraceinclusionlarg{\id}$ &	$\upsimlarg{\dwsimlarg}$ &	$\uptraceinclusionlarg{\dwtraceinclusionlarg}$ 		\\
	  \hline
	  $\id$						& $\tickOK$	& 	$\tickOK$	&	$\tickOK$		&	$\tickOK$	  &	$\tickOK$	  	     &	$\tickNO$		 &	$\tickNO$	\\
	  $\dwsimlarg$					& $\tickOK$	& 	$\tickOK$	&	$\tickOK$		&	$\tickNO$	  &	$\tickNO$		     &	$\tickNO$		 &	$\tickNO$	\\
	  $\dwtraceinclusionlarg$			& $\tickOK$	& 	$\tickOK$	&	$\tickOK$		&	$\tickNO$	  &	$\tickNO$		     &	$\tickNO$		 &	$\tickNO$	\\
	  $\upsim{\id}$					& $\tickOK$ 	& 	$\tickNO$	&	$\tickNO$		&	$\tickOK$	  &	$\tickOK$		     &	$\tickNO$		 &	$\tickNO$	\\
	  $\uptraceinclusion{\id}$			& $\tickOK$ 	& 	$\tickNO$	&	$\tickNO$		&	$\tickOK$	  &	$\tickOK$		     &	$\tickNO$		 &	$\tickNO$	\\
	  $\upsim{\dwsim}$				& $\tickNO$	& 	$\tickNO$	& 	$\tickNO$		& 	$\tickNO$	  & 	$\tickNO$		     &  $\tickNO$		 & 	$\tickNO$	\\
	  $\uptraceinclusion{\dwtraceinclusion}$	& $\tickNO$	& 	$\tickNO$	& 	$\tickNO$		& 	$\tickNO$	  & 	$\tickNO$		     &  $\tickNO$		 & 	$\tickNO$			  
      \end{tabular}
 \end{center}
  \caption{GFS relations for tree automata.
  Relations which are GFS are marked with $\tickOK$ and those which are not are marked with $\tickNO$.}
  \label{fig:GFS_relations}
\end{figure}

\section{Experimental Results} \label{sec:results}

As we saw in Section~\ref{sec:preliminaires},
the automaton computed by Heavy corresponds to the local minimum of the sequence of reduction techniques used, i.e.,
no smaller automaton can be reached by applying that same sequence of steps again.
The motivation behind saturation is to change this scenario,
since modifying an automaton while preserving its language may leave it
in a state where a different local minimum is reachable by applying Heavy again.
Since saturation adds transitions,
in the end an automaton will either have 
1) the same number of states and the same or larger number of transitions,
2) the same number of states but fewer transitions, or
3) fewer states.
We say that scenarios 2) and 3) correspond to an automaton 'better' than the initial one,
and scenario 1) to a 'worse' one.

Our experiments on test automata consisted of first reducing them with Heavy and then alternating
between saturation and reduction successively until either a fixpoint is reached or the automata becomes 'worse'.
Just like in the case of Heavy, 
there is no ideal order to apply the saturation/reduction techniques,
so we tested multiple possibilities, 
from which we highlight two versions,
Sat1(x,y) and Sat2(x,y),
where $x,y \geq 1$ are the lookaheads used for computing $k$-downward and $k$-upward simulations, 
respectively (see Figure~\ref{fig:sat}).
In both Sat1 and Sat2, 
we chose an order for the operations that ensures that the effect of the saturations is not necessarily 
cancelled by the reductions immediately after.
Intuitively, Sat1 starts by applying both saturations together,
in an attempt to obtain a highly dense automaton where more states may be quotiented.
Sat2, on the other hand, prevents the automaton from becoming too dense, 
by interleaving each downward saturation with the upward reductions it may allow.
Moreover, each upward reduction not only may allow for new downward saturations to be performed, 
but it may also have its effect cancelled if the upward saturation is performed immediately after.
Thus, in Sat2 downward saturation and upward reductions are iterated in an inner loop before performing any upward saturation.
Both versions return the 'best' automaton ever encountered.

\renewcommand{\mytab}{\quad}
\begin{figure}[htbp]
\begin{center}
\begin{tabular}{l}
 Sat1(x,y)    \\
 Loop:		\\
 \mytab Sat. w/ $S(\transkdwsimlarg{x}, \transkdwsim{x})$ \\
 \mytab Sat. w/ $S(\transkupsim{y}{id},\transkupsimlarg{y}{id})$	\\
 \mytab Quot. w/ $\transkupsim{y}{\id}$	\\
 \mytab Prune w/ $P(\strictupsim{\id},\transkdwsim{x})$ \\
 \mytab Quot. w/ $\transkupsim{y}{\id}$	\\
 \mytab Prune w/ $P(\transkupsim{y}{\dwsim},\strictdwsim)$ \\
 \mytab Run Heavy(x,y)
\end{tabular} \qquad \qquad
\begin{tabular}{l}
 Sat2(x,y)    \\
 Loop:		\\
 \mytab Loop:	\\
 \mytab \mytab Sat. w/ $S(\transkdwsimlarg{x}, \transkdwsim{x})$ \\
 \mytab \mytab Quot. w/ $\transkupsim{y}{\id}$	\\
 \mytab \mytab Prune w/ $P(\transasymkupsim{y}{\id}, \id)$ \\
 \mytab \mytab Prune w/ $P(\strictupsim{\id},\transkdwsim{x})$ \\
 \mytab \mytab Quot. w/ $\transkupsim{y}{\id}$	\\
 \mytab \mytab Prune w/ $P(\transkupsim{y}{\dwsim},\strictdwsim)$ \\
 \mytab Sat. w/ $S(\transkupsim{y}{id},\transkupsimlarg{y}{id})$	\\
 \mytab Run Heavy(x,y)
\end{tabular}
\end{center}
\caption{Two saturation-based reduction methods.
Both versions return the 'best' automaton ever encountered. }
\label{fig:sat}
\end{figure}

We tested the different saturation-based reduction methods on a set of 14,498 automata 
(57 states and 266 transitions on avg.)
from the shape analysis tool Forester \cite{tool:forester}.
We can see (Figure~\ref{fig:chart_sat_forester}) 
that, on average, the two versions produced automata containing \emph{both} fewer states and, 
especially, fewer transitions than Heavy alone.
However, this came at the expense of longer running times.
\begin{figure}[htbp]
\centering
\includegraphics[height=5.0cm]{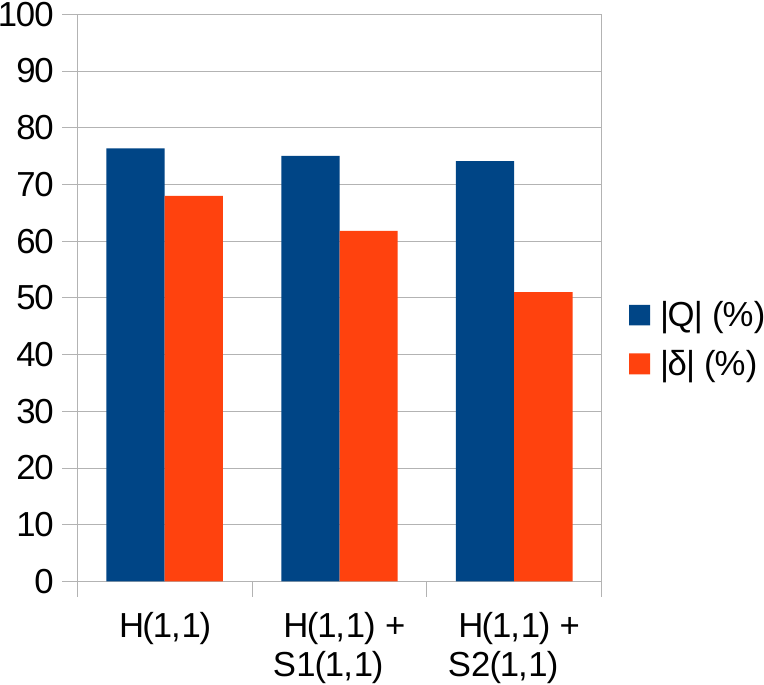} \qquad \quad
\includegraphics[height=5.0cm]{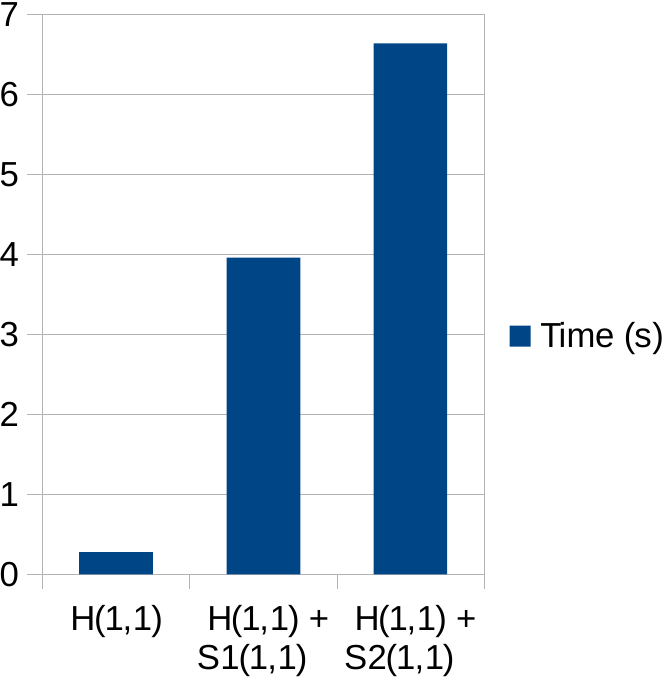}
 \caption{Reduction of Forester automata using saturation methods. 
 The left chart gives the avg. number of states and transitions that remained (in percentage) 
 after application of each method;
 the right chart compares their running times.
 Heavy(1,1) followed by Sat2(1,1) reduced the automata the most,
 but it was also the slowest method.}
\label{fig:chart_sat_forester}
\end{figure}

The results that follow focus on the advantage of reducing automata when computing their complement
(for which we use \libvata's implementation of the difference algorithm \cite{hosoya:xml2010}). 
We started by testing on a subset of the Forester sample 
(Fig.~\ref{fig:chart_compl_forester} and Fig.~\ref{fig:chart_compl_forester_app} in App.\ \ref{appendix:experiments}),
and we compared direct complementation with reducing automata (with Heavy(1,1) optionally followed by Sat2(1,1))
prior to the complementation and with a final reduction using Heavy(1,1).
Due to memory reasons, direct complementation was not feasible for large automata.
Thus the sample used is the subset of Forester containing all automata with at most 14 states,
in a total of 760 automata.
As we can see,
all reduction methods yielded significantly smaller complement automata
than direct complementation, on average,
while running either with similar times or substantially faster.
This difference was particularly notorious when the automata were first reduced with both Heavy(1,1) and 
Sat2(1,1),
which, compared to direct complementation,
resulted in automata with fewer states 
(18 vs 27, see Figure~\ref{fig:chart_compl_forester_app} in App.\ \ref{appendix:experiments}) 
and fewer transitions (649 vs 1750) and 
at much lower times (0.02s vs 4.86s).
Applying Heavy(1,1) in the end reduced the automata even more, 
with a very low time cost.
\begin{figure}[htbp]
\begin{center}
 \includegraphics[height=6.0cm]{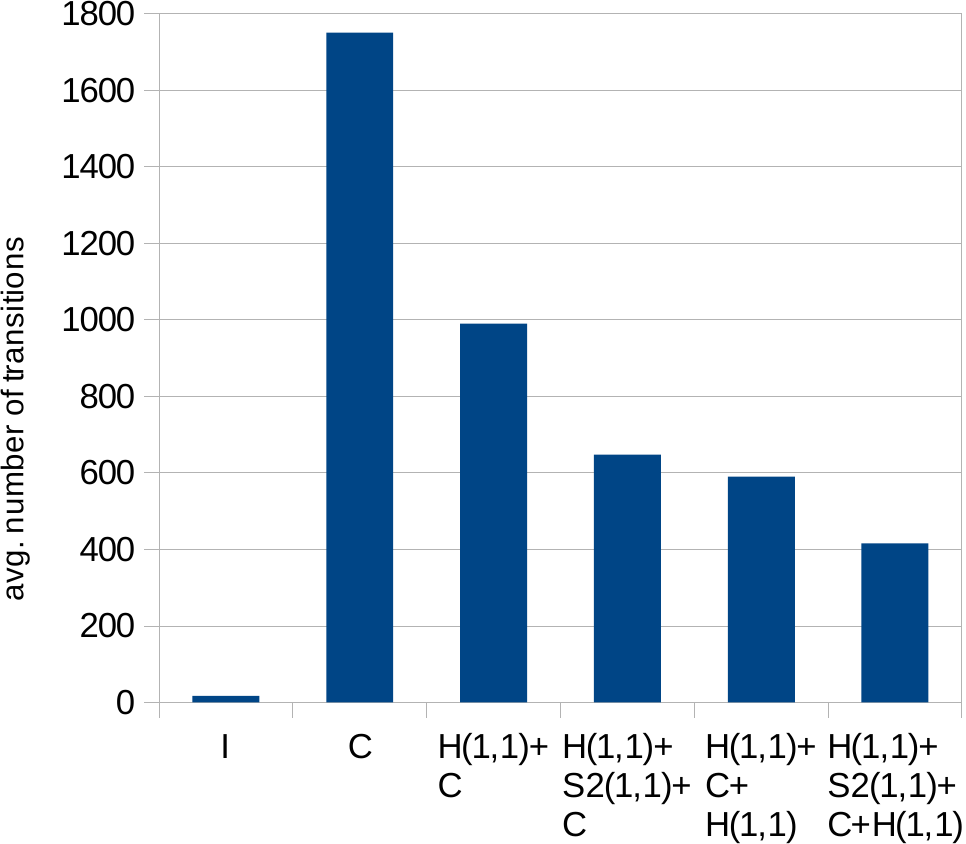} \qquad 
 \includegraphics[height=6.0cm]{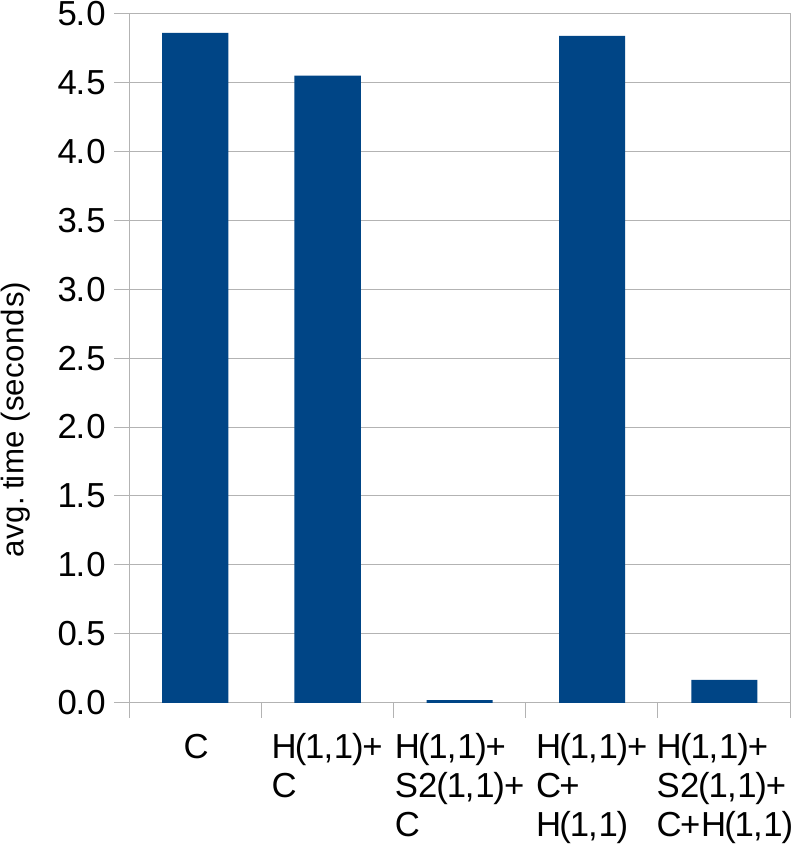} 
\end{center}
 \caption{
 Reducing and complementing Forester automata with at most 14 states.
 The complement automata have fewer transitions and are faster to compute if
 the complementation is preceded by applying Heavy(1,1) and Sat2(1,1) - H(1,1)+S2(1,1)+C -
 or just Heavy(1,1) - H(1,1)+C.
 Applying Heavy(1,1) in the end reduces even more. 
 We include the initial number of transitions (I) for comparison purposes.
 }
\label{fig:chart_compl_forester}
\end{figure}

The next experiments were performed on sets of randomly generated tree automata, 
according to a generalization of the Tabakov-Vardi model of random word automata~\cite{tabakov:model}. 
Given parameters $n, s$, ${\it td}$ (transition density) and ${\it ad}$ (acceptance density), 
it generates tree automata with $n$ states, $s$ symbols (each of rank 2), 
$n*{\it td}$ randomly assigned transitions for each symbol, 
and $n*{\it ad}$ randomly assigned leaf rules. 
Figure~\ref{fig:chart_compl_q=4} 
shows the results of complementing automata with $n=4$ and varying ${\it td}$.
While the automata tested are very small, 
for some values of ${\it td}$ their complements are quite complex 
(more than 400 transitions on average).
As we can see,
applying Heavy not only before but also after the complementation
on average yielded significantly smaller automata,
especially in terms of transitions,
while running with similar times to direct complementation
(all average times were below 0.1s).
Moreover,
the saturation method achieved reductions
in the states space which were not possible with Heavy alone.
This came at the cost of higher running times and also of returning automata with more transitions -
but with still far less transitions then those obtained with direct complementation. 
Note that for very dense automata ($td \geq 4.0$), 
the average size of the complement became particularly small.
This is because 
more than half of the automata generated with such $td$ were universal,
and thus their complements were empty.

\begin{figure}[htbp]
\begin{center}
\includegraphics[height=6.5cm]{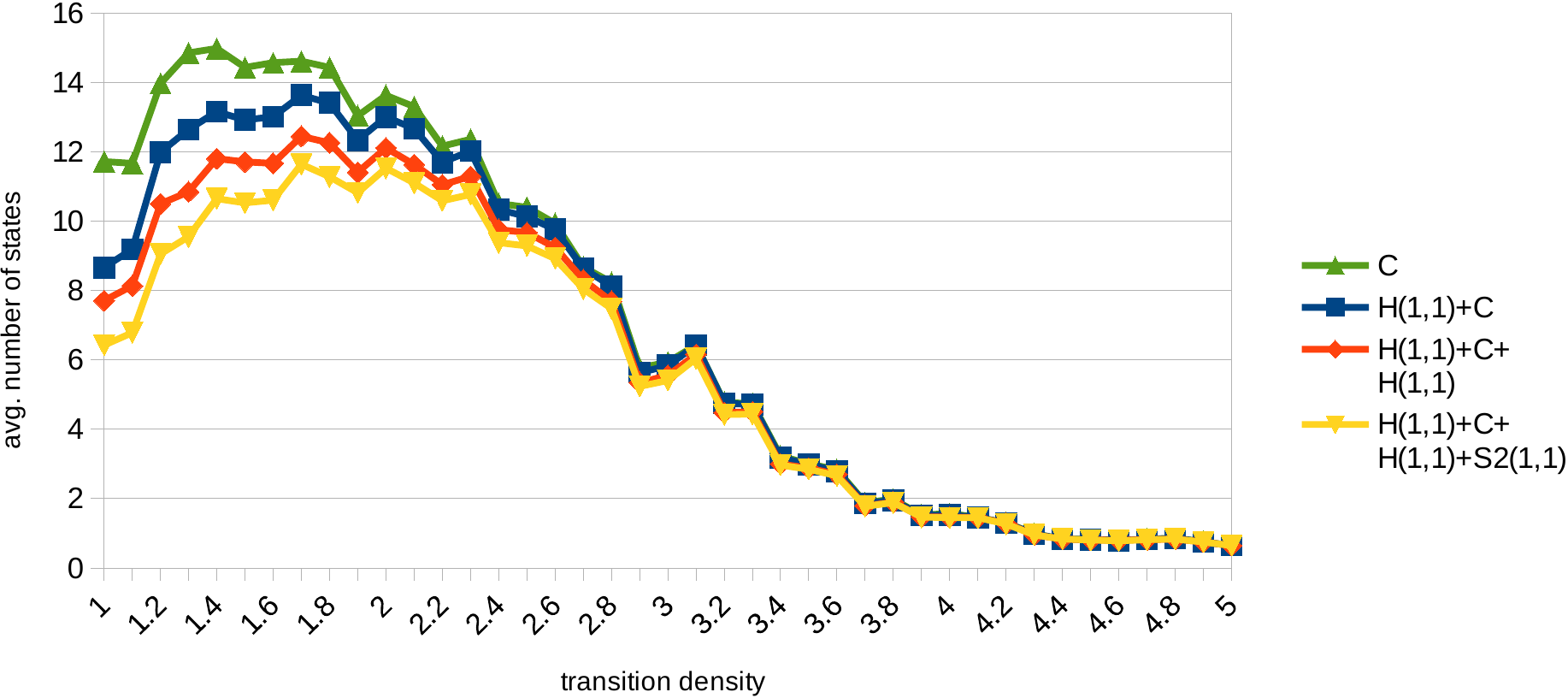} 
\end{center}
\caption{Reducing and complementing Tabakov-Vardi random tree automata with 4 states.
Each data point is the average of $300$ automata.
In general, 
applying Heavy(1,1) before the complementation (H(1,1)+C) yielded 
automata with fewer states, on avg., than direct complementation (C).
When Heavy(1,1) is also used after the complementation,
the difference is even more significant -
H(1,1)+C+H(1,1) - and even more when 
Sat2(1,1) is used - H(1,1)+C+H(1,1)+S2(1,1).}
\label{fig:chart_compl_q=4} 
\end{figure}

We also tested our algorithms on random automata with 7 states 
(Figure~\ref{fig:chart_compl_q=7_app} in App.~\ref{appendix:experiments}),
whose complement automata can have, on avg., up to 100 states and more than 30,000 transitions.
As above,
reducing automata with Heavy both before and after the complementation returned
automata with significantly fewer transitions than direct complementation
(3,000 vs 35,000 in some cases),
but the former was clearly slower (avg. times up to 90s) than the latter (avg. times up to 2.5s)
on the automata region where the difference between the two methods was most drastic.
Still, for highly dense automata ($td \geq 4$),
direct complementation was responsible for the highest times recorded
(avg. times between 135s and 2170s).
Due to the size of the complement automata,
the saturation methods revealed to be too slow to be viable in this case.

All experiments were run on an Intel® Core™ i5 @ 3.20GHz x 4 machine with 8GB of RAM using a 64-bit version 
of Ubuntu 16.04.

\bibliographystyle{eptcs}
\bibliography{literature}

\newpage
\appendix
\section{Proofs and Counterexamples} \label{appendix:proofs}

For Lemma~\ref{lemma:uptraceinclusion} and Theorem~\ref{theorem:up_trace_inc_gfs} below
we make use of the following auxiliary definitions.
For every tree $\tree \in \alltrees(\Sigma)$ and every $\tree$-run $\run$,
let $\level_i(\run)$ be the tuple of states that $\run$ visits at depth $i$ in the tree,
read from left to right.
Formally, let $(\node_1,\dots,\node_n)$,
with each $\node_j \in \nat^i$,
be the set of all tree positions of depth $i$ s.t. each $\node_j \in \dom(\run)$,
in lexicographically increasing order.
Then $\level_i(\run) = (\run(\node_1), \dots, \run(\node_n)) \in Q^n$.
We say that $st \in Q^*$ is a subtuple of $\level_i(\run)$,
and write $st \leq \level_i(\run)$, 
if all states in $st$ also appear in $\level_i(\run)$ and in the same order.
By lifting preorders on $Q$ to preorders on $Q^n$,
we can compare tuples of states w.r.t. $\uptraceinclusion{\id}$.

\begin{lemma}
 Let $\A$ be a TDTA and $(p_1,\ldots,p_n)$ and $(q_1,\ldots,q_n)$ two tuples of states of $\A$
 such that $(p_1,\ldots,p_n) \uptraceinclusion{\!\!\,\id} (q_1,\ldots,q_n)$.
 Then, for every $\tree \in \alltrees(\Sigma)$,
 every accepting $\tree$-run $\run$ and every tuple $(\node_1,\dots,\node_n)$ of some leaves of $\run$ of the same depth $i$
 (i.e., $(\node_1,\dots,\node_n) \leq \level_i(\run)$)
 s.t. $(\run(\node_1), \ldots, \run(\node_n)) = (p_1, \ldots, p_n)$,
 there exists an accepting $\tree$-run $\run'$ of $\A$ 
 such that $(\run'(\node_1), \ldots, \run'(\node_n)) = (q_1, \ldots, q_n)$ 
 and $\run'(\node) = \run(\node)$ for every leaf $\node$ of $\run'$ other than $\node_1, \dots, \node_n$.
 \label{lemma:uptraceinclusion}
\end{lemma}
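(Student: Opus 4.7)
The plan is to exploit Duplicator's winning strategy in the upward trace inclusion game lifted to tuples of states, starting from the configuration $((p_1,\dots,p_n),(q_1,\dots,q_n))$. In this lifted game, Spoiler chooses upward transitions at each tracked position (two pointers naturally merge when they share a common parent), and Duplicator must respond with the same symbols and, because the side relation is $\id$, \emph{equal} side-state values; furthermore, the condition ``$q\in I \Rightarrow r\in I$'' must hold pointwise at every configuration. Duplicator having a winning strategy here is exactly what $(p_1,\dots,p_n)\uptraceinclusion{\id}(q_1,\dots,q_n)$ asserts.

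The run $\run$, restricted to the portion of $\dom(\run)$ at depth $\le i$, now supplies a canonical Spoiler play: at each ancestor of some $\node_j$, Spoiler plays the transition that $\run$ uses there. Duplicator's winning response then prescribes a state at every strict ancestor of some $\node_j$. I would define $\run'$ on $\dom(\run)=\dom(\run')$ by setting $\run'(\node_j)=q_j$ for each $j$; setting $\run'(\node)$ equal to Duplicator's response at every strict ancestor of some $\node_j$; and setting $\run'(\node)=\run(\node)$ at every other position. Since the $\node_j$'s are leaves of $\run$, every position of $\dom(\run)$ at depth $>i$ lies in the subtree rooted at some side position, so the last clause cleanly covers everything below depth $i$.

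Three verification steps then remain. First, $\run'$ is a valid $\tree$-run: at strict ancestors Duplicator plays valid transitions in $\delta$, and at the boundary the $\id$ side-equality guaranteed by the game forces Duplicator's side children to coincide with the unchanged $\run$-values there, while at all other positions $\run'=\run$. Second, $\run'$ is accepting, since the pointwise initial-state condition applied at the final (root-level) configuration yields $\run'(\epsilon)\in I$ from $\run(\epsilon)\in I$. Third, for every leaf $\node$ of $\run'$ distinct from the $\node_j$'s, $\node$ either lies strictly below some side position (at depth $>i$, where $\run'=\run$) or is itself a non-tracked depth-$i$ position (a side position, also with $\run'=\run$). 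The main obstacle is pinning down the lifted upward trace inclusion game precisely --- in particular, the pointer-merging bookkeeping when two tracked positions reach a common ancestor, and the consistent identification of side positions accumulated across all the levels above depth $i$; once that combinatorial bookkeeping is settled, each verification step is a straightforward unwinding of the definitions.
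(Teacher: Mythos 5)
There is a genuine gap at the very first step: you read the hypothesis $(p_1,\dots,p_n)\uptraceinclusion{\id}(q_1,\dots,q_n)$ as saying that Duplicator wins a \emph{simultaneous} multi-pointer upward game started from the whole tuple, and you assert that this ``is exactly what the hypothesis asserts.'' It is not. The paper lifts $\uptraceinclusion{\id}$ from $Q$ to $Q^n$ in the standard componentwise way (the same lifting $\hat{R}$ used for pruning and for $\hat{R_t}$ in the definition of $S(R_s,R_t)$), so the hypothesis only provides $n$ independent single-state facts $p_j\uptraceinclusion{\id}q_j$ --- and that is all that is available where the lemma is invoked in the proof of Theorem~\ref{thrm:up_trace}, since condition 1) there comes from the componentwise lifting. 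Passing from these $n$ independent guarantees to \emph{one} accepting run in which all $n$ leaves are changed at once is precisely the content of the lemma; by building the simultaneity into your reading of the hypothesis you assume what is to be proved.

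Moreover, the ``pointer-merging bookkeeping'' you defer as routine is exactly where the difficulty sits: in the single-state upward game with side relation $\id$, when the upward path from $\node_1$ reaches a common ancestor of $\node_1$ and $\node_2$, the $\id$ side condition forces the child leading towards $\node_2$ to keep its old state, which contradicts the goal of changing $\node_2$'s label to $q_2$. So there is no off-the-shelf lifted game to invoke; a correctly defined multi-pointer game would need its own soundness argument relative to the componentwise hypothesis. The paper sidesteps all of this by changing one component at a time: an induction on $j$ replaces $p_j$ by $q_j$ using $p_j\uptraceinclusion{\id}q_j$ alone, and the fact that the side relation is $\id$ guarantees that each swap leaves every other leaf --- in particular the remaining $\node_{j'}$ --- untouched, so the next swap can be applied to the resulting run. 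Your argument can be repaired by adopting this sequential scheme (or, equivalently, by first proving that the componentwise lifting implies a win in your simultaneous game, which amounts to the same induction).
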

\begin{proof}
 Let $\run$ be an accepting $\tree$-run of $\A$ s.t. $(\run(\node_1),\dots,\run(\node_n)) = (p_1,\dots,p_n)$.
 We say that an accepting $\tree$-run $\run''$ is $i$-good iff 
 i) for every node $\node_j$ of $\run''$, with $j \leq i$, $\run''(\node_j) = q_j$, and
 ii) for every $\node_j$, with $i < j \leq n$, $\run''(\node_j) = p_j$.
 We will show, by induction on $i$, that for every $i$
 there exists an accepting $\tree$-run $\run'''$ which is $i$-good and 
 s.t. $\run'''(\node) = \run(\node)$ for every leaf $\node$ of $\run'''$ other than $\node_1,\dots,\node_n$.
 For the particular case of $i=n$ this proves the lemma.
 
 The base case $i=0$ is trivial,
 since the accepting $\tree$-run $\run$ is $0$-good itself.
 
 For the induction step,
 let $\run_1$ be an accepting $(i-1)$-good $\tree$-run of $\A$.
 If $i > n$, the lemma holds trivially.
 Otherwise, we have $\run_1(\node_i) = p_i \uptraceinclusion{\!\id}\; q_i$
 and thus there exists an accepting $\tree$-run $\run_2$ of $\A$ s.t. $\run_2(\node_i) = q_i$.
 And since the upward trace inclusion is parameterized by $\id$,
 it follows, in particular, that for every leaf $\node$ other than $\node_i$, 
 $\run_2(\node)=\run_1(\node)$.
 Thus, $\run_2$ is an accepting $i$-good $\tree$-run of $\A$.
 Moreover, we have that,  
 on leaves other than $\node_1,\dots,\node_n$,
 the run $\run_2$ coincides with $\run_1$ and consequently, 
 by the induction hypothesis, with $\run$.
\end{proof}

{\bf\noindent Theorem~\ref{thrm:up_trace}}
 $S(\uptraceinclusion{id},\uptraceinclusionlarg{id})$ is GFS.
\begin{proof}
 Let $\A$ be a TDTA and $A_S = Sat(A,S(\uptraceinclusion{id},\uptraceinclusionlarg{id}))$.
 If $\hat{\tree} \in A_S$,
 then there exists an accepting $\hat{\tree}$-run $\hat{\run}$ of $A_S$.
 We will show that there exists an accepting $\hat{\tree}$-run of $\A$, 
 which proves $A_S \subseteq A$.
 
 Let us first define an auxiliary notion.
 For every $\tree \in \alltrees(\Sigma)$ and every $\tree$-run $\run$,
 we say that $\run$ is $i$-good iff it does not contain any transition of $\delta_S - \delta$
 from any position $\node \in \nat^*$ s.t. $|\node| < i$, i.e.,
 all transitions used in the first $i$ levels of the tree are of $A$.
 
 Next, we will show, by induction on $i$, 
 that for every $i$ there exists an accepting $i$-good $\hat{\tree}$-run $\hat{\run}'$ of $A_S$
 s.t. $\level_i(\hat{\run}') = \level_i(\hat{\run})$.
 For $i$ equal to the height of $\hat{\tree}$,
 this implies that there exists an accepting $\hat{\tree}$-run of $\A$.
 
 The base case $i=0$ is trivial, since $\hat{\run}$ is $0$-good itself.
 
 For the induction step,
 let us first define some auxiliary notions.
 For every $\tree \in \alltrees(\Sigma)$ and every $\tree$-run $\run$,
 we say that $\level_{i'}(\run)$ is $j$-good iff $\run$ does not contain a transition of 
 $\delta_S - \delta$ from a state $\run(\node_k)$, 
 s.t. $k \leq j$ and $\run(\node_k)$ is the $k$-th state of $\level_{i'}(\run)$.
 We now say that an accepting $\hat{\tree}$-run $\hat{\run}''$ of $A_S$ is $(i-1,j)$-good iff
 i) it is $(i-1)$-good, 
 ii) $\level_{i-1}(\hat{\run}'')$ is $j$-good, and
 iii) $\level_i(\hat{\run}'') = \level_i(\hat{\run})$.
 
 We will now show, by induction on $j$,
 that for every $j$ there exists an accepting $(i-1,j)$-good $\hat{\tree}$-run of $A_S$.
 Since trees are finitely-branching,
 we have that for a sufficiently large $j$ there is an 
 accepting $\hat{\tree}$-run $\hat{\run}'''$ of $A_S$ which is $i$-good.
 And since, in particular, $\level_i(\hat{\run}''') = \level_i(\hat{\run})$,
 this will conclude the outer induction.
 
 For the base case $(i-1,0)$,
 we know by the hypothesis of the outer induction that there exists an accepting $(i-1)$-good
 $\hat{\tree}$-run $\run_1$ s.t. $\level_{i-1}(\run_1) = \level_{i-1}(\hat{\run})$.
 Then the $\hat{\tree}$-run $\run_2$ which,
 on the levels below $i$, coincides with $\run_1$ and,
 on the levels from $i$ up,
 coincides with $\hat{\run}$ too is accepting and $(i-1)$-good.
 Thus $\run_2$ is $(i-1,0)$-good.
 
 For the induction step,
 let $\run_1$ be an accepting $(i-1,j-1)$-good $\hat{\tree}$-run of $A_S$,
 and let $\run_1'$ be the prefix of $\run_1$ which only uses transitions of $A$.
 $\run_1'$ is thus an accepting run of $A$ over some prefix tree $\hat{\tree}'$ of $\hat{\tree}$.
 Let $\node_j$ be the node of $\hat{\tree}$ s.t. $\run_1'(\node_j)$ is the $j$-th 
 state of $\level_{i-1}(\run_1')$ and $\sigma = \hat{\tree}(\node_j)$ a symbol of rank $r$.
 
 If $r=0$, then $\node_j$ is a leaf of $\hat{\tree}$ and so there exists a transition 
 $\langle \run_1'(\node_j), \sigma, \spstate \rangle$ in $A_S$.
 By the definition of $\delta_S$,
 there exists a transition $\langle p, \sigma, \spstate \rangle$ in $A$ 
 s.t. $\run_1'(\node_j) \uptraceinclusion{\id} \;p$.
 Thus there exists an accepting $\hat{\tree}'$-run $\run_2$ of $\A$ s.t. $\run_2(\node_j) = p$ and
 for any leaf $\node$ of $\run_2$ other than $\node_j$, $\run_2(\node) = \run_1'(\node)$.
 We now obtain a run over $\hat{\tree}$ again by extending $\run_2$ downwards according to $\run_1$, i.e.,
 $\run_2(\node\node') := \run_1(\node\node')$,
 for every leaf $\node$ of $\run_2$ other than $\node_j$ and for every $\node' \in \nat^*$.
 It follows that $\level_i(\run_2) = \level_i(\run_1) = \level_i(\hat{\run})$.
 $\run_2$ is clearly a $(i-1)$-good $\hat{\tree}$-run of $A_S$ and $\level_{i-1}(\run_2)$
 is $j$-good.
 Thus $\run_2$ is an accepting $(i-1,j)$-good $\hat{\tree}$-run of $A_S$.
 
 If $r > 0$,
 then $\node_j$ is not a leaf and so there exists a transition 
 $\langle \run_1'(\node_j), \sigma, \run_1(\node_j1) \dots \run_1(\node_jr) \rangle$ in $A_S$.
 By the definition of $\delta_S$, 
 there exists a transition $trans$: $\langle p, \sigma, q_1 \dots q_r \rangle$ in $A$ s.t.
 $\run_1'(\node_j) \uptraceinclusion{\id} p$ and
 \\ 1) $(q_1 \dots q_r) \uptraceinclusion{\id} (\run_1(\node_j1) \dots \run_1(\node_jr))$.
 From $\run_1'(\node_j) \uptraceinclusion{\id} \; p$ we have that there exists an accepting
 $\hat{\tree}'$-run $\run_2$ of $\A$ s.t. $\run_2(\node_j) = p$ and $\run_2(\node) = \run_1'(\node)$,
 for every leaf $\node$ of $\run_2$ other than $\node_j$.
 Extending $\run_2$ with $trans$ we obtain an accepting run of $A$ s.t. 
 $\run_2(\node_jk) := q_k$
 for each child $\node_jk$ of $\node_j$.
 Applying Lemma~\ref{lemma:uptraceinclusion} to 1),
 we obtain that there exists an accepting run $\run_3$ of $\A$ over the same prefix tree of $\hat{t}$
 as $\run_2$ s.t. 
 2) $\run_3(\node_jk) = \run_1(\node_jk)$ for each child $\node_jk$ of $\node_j$,
 and $\run_3(\node) = \run_2(\node) = \run_1(\node)$ for every leaf $\node$ of $\run_3$ other than
 $\node_j1, \dots, \node_jr$.
 We now obtain a run over $\hat{\tree}$ again by extending $\run_3$ downwards according to $\run_1$, i.e.,
 3) $\run_3(\node\node') := \run_1(\node\node')$,
 for every leaf $\node$ of $\run_3$ other than $\node_j1, \cdots, \node_jr$ and for every $\node' \in \nat^*$.
 $\run_3$ is clearly a $(i-1)$-good $\hat{\tree}$-run of $A_S$ and $\level_{i-1}(\run_3)$ is $j$-good.
 From 2) and 3),
 we obtain that $\level_i(\run_3) = \level_i(\run_1) = \level_i(\hat{\run})$.
 Thus $\run_3$ is an accepting $(i-1,j)$-good $\hat{\tree}$-run of $A_S$.
\end{proof}

\begin{figure}[p]
 \begin{center}
 \begin{large}
  \scalebox{0.7}{\input{GFS-counterex-idAndupofdw.tex}}
 \end{large}
 \end{center}
 \vspace{-5pt}
 \caption{$S(\id, \upequiv\!\!\!(\dwequiv))$ is not GFS: if we add the dotted transitions, 
 the tree $a(b(c),b(c))$ is now accepted.
 The symbols $c$, $b$ and $a$ have ranks $0$, $1$ and $2$, resp.}
 \label{fig:GFS_counterex_idAndupofdw}
\end{figure}

\begin{figure}[p]
 \begin{center}
 \begin{large}
  \scalebox{0.7}{\input{GFS-counterex-upofdwAndid.tex}}
 \end{large}
 \end{center}
 \vspace{-5pt}
 \caption{$S(\upequiv\!\!(\dwequiv), \id)$ is not GFS: if we add the dotted transitions, 
 the tree $a(b,b)$ is now accepted.
 The symbols $b$ and $a$ have ranks $0$ and $1$, respectively.}
 \label{fig:GFS_counterex_upofdwAndid}
\end{figure}

\begin{figure}
 \begin{center}
 \begin{large}
  \scalebox{0.7}{
\ifx\du\undefined
  \newlength{\du}
\fi
\setlength{\du}{15\unitlength}
\begin{tikzpicture}
\pgftransformxscale{1.000000}
\pgftransformyscale{-1.000000}
\definecolor{dialinecolor}{rgb}{0.000000, 0.000000, 0.000000}
\pgfsetstrokecolor{dialinecolor}
\definecolor{dialinecolor}{rgb}{1.000000, 1.000000, 1.000000}
\pgfsetfillcolor{dialinecolor}
\definecolor{dialinecolor}{rgb}{1.000000, 1.000000, 1.000000}
\pgfsetfillcolor{dialinecolor}
\pgfpathellipse{\pgfpoint{26.884426\du}{4.372139\du}}{\pgfpoint{0.913526\du}{0\du}}{\pgfpoint{0\du}{0.910869\du}}
\pgfusepath{fill}
\pgfsetlinewidth{0.100000\du}
\pgfsetdash{}{0pt}
\pgfsetdash{}{0pt}
\pgfsetmiterjoin
\definecolor{dialinecolor}{rgb}{0.000000, 0.000000, 0.000000}
\pgfsetstrokecolor{dialinecolor}
\pgfpathellipse{\pgfpoint{26.884426\du}{4.372139\du}}{\pgfpoint{0.913526\du}{0\du}}{\pgfpoint{0\du}{0.910869\du}}
\pgfusepath{stroke}
\definecolor{dialinecolor}{rgb}{0.000000, 0.000000, 0.000000}
\pgfsetstrokecolor{dialinecolor}
\node at (26.884426\du,4.567139\du){};
\definecolor{dialinecolor}{rgb}{0.000000, 0.000000, 0.000000}
\pgfsetstrokecolor{dialinecolor}
\node[anchor=west] at (25.132300\du,1.844450\du){$\upequiv\!\!\!(R)$};
\definecolor{dialinecolor}{rgb}{1.000000, 1.000000, 1.000000}
\pgfsetfillcolor{dialinecolor}
\pgfpathellipse{\pgfpoint{22.686826\du}{1.670893\du}}{\pgfpoint{0.913526\du}{0\du}}{\pgfpoint{0\du}{0.910869\du}}
\pgfusepath{fill}
\pgfsetlinewidth{0.100000\du}
\pgfsetdash{}{0pt}
\pgfsetdash{}{0pt}
\pgfsetmiterjoin
\definecolor{dialinecolor}{rgb}{0.000000, 0.000000, 0.000000}
\pgfsetstrokecolor{dialinecolor}
\pgfpathellipse{\pgfpoint{22.686826\du}{1.670893\du}}{\pgfpoint{0.913526\du}{0\du}}{\pgfpoint{0\du}{0.910869\du}}
\pgfusepath{stroke}
\definecolor{dialinecolor}{rgb}{0.000000, 0.000000, 0.000000}
\pgfsetstrokecolor{dialinecolor}
\node at (22.686826\du,1.865893\du){};
\pgfsetlinewidth{0.100000\du}
\pgfsetdash{}{0pt}
\pgfsetdash{}{0pt}
\pgfsetbuttcap
{
\definecolor{dialinecolor}{rgb}{0.000000, 0.000000, 0.000000}
\pgfsetfillcolor{dialinecolor}
\pgfsetarrowsend{stealth}
\definecolor{dialinecolor}{rgb}{0.000000, 0.000000, 0.000000}
\pgfsetstrokecolor{dialinecolor}
\draw (20.580000\du,0.269064\du)--(21.885193\du,1.137507\du);
}
\pgfsetlinewidth{0.050000\du}
\pgfsetdash{}{0pt}
\pgfsetdash{}{0pt}
\pgfsetbuttcap
{
\definecolor{dialinecolor}{rgb}{0.000000, 0.000000, 0.000000}
\pgfsetfillcolor{dialinecolor}
\pgfsetarrowsend{stealth}
\definecolor{dialinecolor}{rgb}{0.000000, 0.000000, 0.000000}
\pgfsetstrokecolor{dialinecolor}
\draw (22.686800\du,2.581760\du)--(22.692200\du,5.029260\du);
}
\pgfsetlinewidth{0.050000\du}
\pgfsetdash{}{0pt}
\pgfsetdash{}{0pt}
\pgfsetbuttcap
{
\definecolor{dialinecolor}{rgb}{0.000000, 0.000000, 0.000000}
\pgfsetfillcolor{dialinecolor}
\pgfsetarrowsend{stealth}
\definecolor{dialinecolor}{rgb}{0.000000, 0.000000, 0.000000}
\pgfsetstrokecolor{dialinecolor}
\draw (26.238500\du,5.016220\du)--(24.552600\du,6.249850\du);
}
\pgfsetlinewidth{0.050000\du}
\pgfsetdash{}{0pt}
\pgfsetdash{}{0pt}
\pgfsetbuttcap
{
\definecolor{dialinecolor}{rgb}{0.000000, 0.000000, 0.000000}
\pgfsetfillcolor{dialinecolor}
\pgfsetarrowsend{stealth}
\definecolor{dialinecolor}{rgb}{0.000000, 0.000000, 0.000000}
\pgfsetstrokecolor{dialinecolor}
\draw (23.332700\du,2.314980\du)--(26.051557\du,3.889741\du);
}
\definecolor{dialinecolor}{rgb}{0.000000, 0.000000, 0.000000}
\pgfsetstrokecolor{dialinecolor}
\node[anchor=west] at (21.907796\du,3.680610\du){c};
\definecolor{dialinecolor}{rgb}{1.000000, 1.000000, 1.000000}
\pgfsetfillcolor{dialinecolor}
\pgfpathellipse{\pgfpoint{23.198305\du}{6.241876\du}}{\pgfpoint{1.178805\du}{0\du}}{\pgfpoint{0\du}{1.175376\du}}
\pgfusepath{fill}
\pgfsetlinewidth{0.100000\du}
\pgfsetdash{}{0pt}
\pgfsetdash{}{0pt}
\pgfsetmiterjoin
\definecolor{dialinecolor}{rgb}{0.000000, 0.000000, 0.000000}
\pgfsetstrokecolor{dialinecolor}
\pgfpathellipse{\pgfpoint{23.198305\du}{6.241876\du}}{\pgfpoint{1.178805\du}{0\du}}{\pgfpoint{0\du}{1.175376\du}}
\pgfusepath{stroke}
\definecolor{dialinecolor}{rgb}{0.000000, 0.000000, 0.000000}
\pgfsetstrokecolor{dialinecolor}
\node at (23.198305\du,6.436876\du){$\spstate$};
\pgfsetlinewidth{0.100000\du}
\pgfsetdash{}{0pt}
\pgfsetdash{}{0pt}
\pgfsetmiterjoin
\definecolor{dialinecolor}{rgb}{0.000000, 0.000000, 0.000000}
\pgfsetstrokecolor{dialinecolor}
\pgfpathellipse{\pgfpoint{23.207067\du}{6.249851\du}}{\pgfpoint{1.345467\du}{0\du}}{\pgfpoint{0\du}{1.321151\du}}
\pgfusepath{stroke}
\definecolor{dialinecolor}{rgb}{0.000000, 0.000000, 0.000000}
\pgfsetstrokecolor{dialinecolor}
\node at (23.207067\du,6.444851\du){};
\pgfsetlinewidth{0.050000\du}
\pgfsetdash{{\pgflinewidth}{0.200000\du}}{0cm}
\pgfsetdash{{\pgflinewidth}{0.200000\du}}{0cm}
\pgfsetbuttcap
{
\definecolor{dialinecolor}{rgb}{0.000000, 0.000000, 0.000000}
\pgfsetfillcolor{dialinecolor}
\pgfsetarrowsend{stealth}
\definecolor{dialinecolor}{rgb}{0.000000, 0.000000, 0.000000}
\pgfsetstrokecolor{dialinecolor}
\draw (30.692486\du,1.101903\du)--(30.960052\du,1.745985\du);
}
\definecolor{dialinecolor}{rgb}{0.000000, 0.000000, 0.000000}
\pgfsetstrokecolor{dialinecolor}
\node[anchor=west] at (30.686413\du,0.857468\du){a};
\definecolor{dialinecolor}{rgb}{0.000000, 0.000000, 0.000000}
\pgfsetstrokecolor{dialinecolor}
\node[anchor=west] at (25.636217\du,5.988070\du){c};
\definecolor{dialinecolor}{rgb}{0.000000, 0.000000, 0.000000}
\pgfsetstrokecolor{dialinecolor}
\node[anchor=west] at (24.681500\du,2.777220\du){a};
\pgfsetlinewidth{0.100000\du}
\pgfsetdash{}{0pt}
\pgfsetdash{}{0pt}
\pgfsetbuttcap
{
\definecolor{dialinecolor}{rgb}{0.000000, 0.000000, 0.000000}
\pgfsetfillcolor{dialinecolor}
\pgfsetarrowsend{stealth}
\definecolor{dialinecolor}{rgb}{0.000000, 0.000000, 0.000000}
\pgfsetstrokecolor{dialinecolor}
\draw (28.039200\du,0.335822\du)--(29.344400\du,1.204260\du);
}
\definecolor{dialinecolor}{rgb}{1.000000, 1.000000, 1.000000}
\pgfsetfillcolor{dialinecolor}
\pgfpathellipse{\pgfpoint{30.046526\du}{1.745985\du}}{\pgfpoint{0.913526\du}{0\du}}{\pgfpoint{0\du}{0.910869\du}}
\pgfusepath{fill}
\pgfsetlinewidth{0.100000\du}
\pgfsetdash{}{0pt}
\pgfsetdash{}{0pt}
\pgfsetmiterjoin
\definecolor{dialinecolor}{rgb}{0.000000, 0.000000, 0.000000}
\pgfsetstrokecolor{dialinecolor}
\pgfpathellipse{\pgfpoint{30.046526\du}{1.745985\du}}{\pgfpoint{0.913526\du}{0\du}}{\pgfpoint{0\du}{0.910869\du}}
\pgfusepath{stroke}
\definecolor{dialinecolor}{rgb}{0.000000, 0.000000, 0.000000}
\pgfsetstrokecolor{dialinecolor}
\node at (30.046526\du,1.940985\du){};
\pgfsetlinewidth{0.050000\du}
\pgfsetdash{}{0pt}
\pgfsetdash{}{0pt}
\pgfsetbuttcap
{
\definecolor{dialinecolor}{rgb}{0.000000, 0.000000, 0.000000}
\pgfsetfillcolor{dialinecolor}
\pgfsetarrowsend{stealth}
\definecolor{dialinecolor}{rgb}{0.000000, 0.000000, 0.000000}
\pgfsetstrokecolor{dialinecolor}
\draw (30.046600\du,2.656850\du)--(30.031000\du,4.940700\du);
}
\definecolor{dialinecolor}{rgb}{0.000000, 0.000000, 0.000000}
\pgfsetstrokecolor{dialinecolor}
\node[anchor=west] at (30.202000\du,3.900730\du){c};
\definecolor{dialinecolor}{rgb}{1.000000, 1.000000, 1.000000}
\pgfsetfillcolor{dialinecolor}
\pgfpathellipse{\pgfpoint{30.022305\du}{6.253876\du}}{\pgfpoint{1.178805\du}{0\du}}{\pgfpoint{0\du}{1.175376\du}}
\pgfusepath{fill}
\pgfsetlinewidth{0.100000\du}
\pgfsetdash{}{0pt}
\pgfsetdash{}{0pt}
\pgfsetmiterjoin
\definecolor{dialinecolor}{rgb}{0.000000, 0.000000, 0.000000}
\pgfsetstrokecolor{dialinecolor}
\pgfpathellipse{\pgfpoint{30.022305\du}{6.253876\du}}{\pgfpoint{1.178805\du}{0\du}}{\pgfpoint{0\du}{1.175376\du}}
\pgfusepath{stroke}
\definecolor{dialinecolor}{rgb}{0.000000, 0.000000, 0.000000}
\pgfsetstrokecolor{dialinecolor}
\node at (30.022305\du,6.448876\du){$\spstate$};
\pgfsetlinewidth{0.100000\du}
\pgfsetdash{}{0pt}
\pgfsetdash{}{0pt}
\pgfsetmiterjoin
\definecolor{dialinecolor}{rgb}{0.000000, 0.000000, 0.000000}
\pgfsetstrokecolor{dialinecolor}
\pgfpathellipse{\pgfpoint{30.031067\du}{6.261851\du}}{\pgfpoint{1.345467\du}{0\du}}{\pgfpoint{0\du}{1.321151\du}}
\pgfusepath{stroke}
\definecolor{dialinecolor}{rgb}{0.000000, 0.000000, 0.000000}
\pgfsetstrokecolor{dialinecolor}
\node at (30.031067\du,6.456851\du){};
\pgfsetlinewidth{0.050000\du}
\pgfsetdash{{\pgflinewidth}{0.200000\du}}{0cm}
\pgfsetdash{{\pgflinewidth}{0.200000\du}}{0cm}
\pgfsetbuttcap
{
\definecolor{dialinecolor}{rgb}{0.000000, 0.000000, 0.000000}
\pgfsetfillcolor{dialinecolor}
\pgfsetarrowsstart{stealth}
\definecolor{dialinecolor}{rgb}{0.000000, 0.000000, 0.000000}
\pgfsetstrokecolor{dialinecolor}
\pgfpathmoveto{\pgfpoint{30.890465\du}{1.397399\du}}
\pgfpatharc{104}{-193}{0.672245\du and 0.672245\du}
\pgfusepath{stroke}
}
\definecolor{dialinecolor}{rgb}{0.000000, 0.000000, 0.000000}
\pgfsetstrokecolor{dialinecolor}
\node[anchor=west] at (27.576917\du,3.138946\du){\rotatebox{45}{$\dwequiv$}};
\end{tikzpicture}}
 \end{large}
 \end{center}
 \vspace{-5pt}
 \caption{$S(\upequiv\!\!\!(R), \dwequiv)$ is not GFS for any relation $R \subseteq Q \times Q$
 (example adapted from~\cite{lorenzomayr:reductionAut2006}): 
 if we add the dotted transition, 
 the linear tree $aac$ is now accepted.
 The symbol $c$ has rank $0$ and $a$ has rank $1$.}
 \label{fig:GFS_counterex_upAnddw}
\end{figure}

\newpage
\section{Variants of the Optimization to the Lookahead Simulation Game} \label{appendix:caching}

In Section~\ref{sec:efficient} we presented an optimization to the computation of the $k$-lookahead downward simulation
based on the caching of attacks in the simulation game between Spoiler and Duplicator.
In this appendix we present two alternative versions to this optimization,
where we change the scope of the attacks cached.

\paragraph{Local caching of Spoiler's attacks.}
Whenever Spoiler uses a transition $t$ in an attack,
Duplicator can memorize which states in the automaton are able to defend 
against the target states of $t$.
In Fig.~\ref{fig:opt_automata} from Section~\ref{sec:efficient}, 
in a round of the simulation game from 
$(p_2,q_2)$, Spoiler is attempting the attack $d(e,e)$ leading to $p_5,p_7$.
Duplicator tries responding with a $d$-transition to $(q_4,q_5)$,
and since there is a $e$-transition from $q_4$ to $q_7$ and $p_5 \,W\, q_7$,
Duplicator caches the information that, 
against $q_4$, the first sub-attack is a \emph{bad} one.
However, $q_5$ can only read $f$ and so Duplicator will have to try a different defence.
Duplicator now tries the $d$-transition leading to $q_4$ and $q_6$ instead.
Thanks to the information recorded, 
Duplicator now only needs to find a defence from $q_6$ against $p_6$,
which exists since $q_6$ goes to $q_8$ by $e$ and $p_5 \,W\, q_8$,
and so Duplicator declares victory against this particular attack.

Conversely, if the game configuration was $(p_2,q_3)$,
after trying to defend against the attack $c(e,e)$ using the $c$-transition to $q_9$ and $q_{10}$, 
Duplicator could reuse the information that the sub-attack $e$ is \emph{good} against $q_9$ 
when trying the $c$-transition to $q_9$ and $q_{11}$.

\paragraph{Global caching of Spoiler's attacks.}
Here we expand the scope to the entire $W$-refinement.
E.g., the \emph{good} attacks from $p_2$ against $q_2$ or against $q_3$ can be 
recalled even when a game from a different configuration, say, $(p_8,q_1)$ is played.
However, the information about the \emph{bad} attack from, say, 
$p_3$ against $q_4$ cannot be used outside of the local game in which it was saved,
since Duplicator could only defend against it based on the state of $W$ at the time.
Note the asymmetry between \emph{good} and \emph{bad} attacks:
\emph{good} attacks remain \emph{good} for the rest of the entire computation, but
\emph{bad} attacks may become \emph{good} after $W$ changes.

\newpage
\section{More Charts from the Experimental Results} \label{appendix:experiments}

\begin{figure}[htbp]
\begin{center}
 \includegraphics[height=6.0cm]{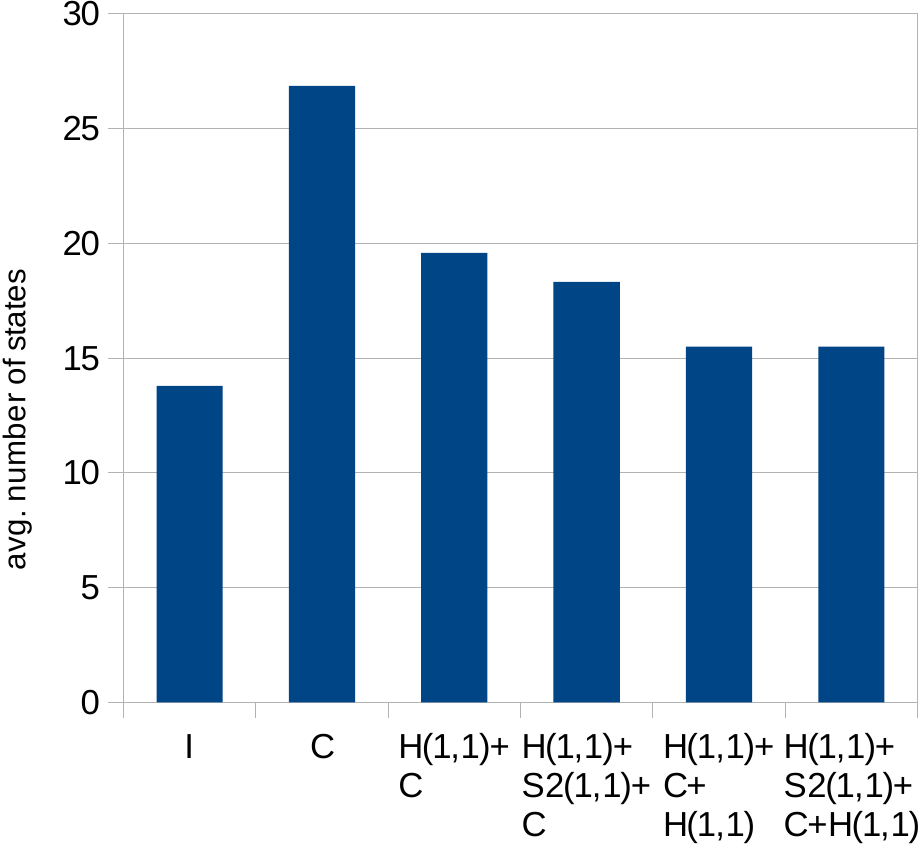} 
\end{center}
 \caption{
 Reducing and complementing Forester automata with at most 14 states.
 The complement automata have fewer states if
 the complementation is preceded by applying Heavy(1,1) and Sat2(1,1) - H(1,1)+S2(1,1)+C -
 or just Heavy(1,1) - H(1,1)+C.
 Applying Heavy(1,1) in the end reduces even more. 
 We include the initial number of states (I) for comparison purposes.
 }
\label{fig:chart_compl_forester_app}
\end{figure}

\begin{figure}
\begin{center}
\includegraphics[height=6.5cm]{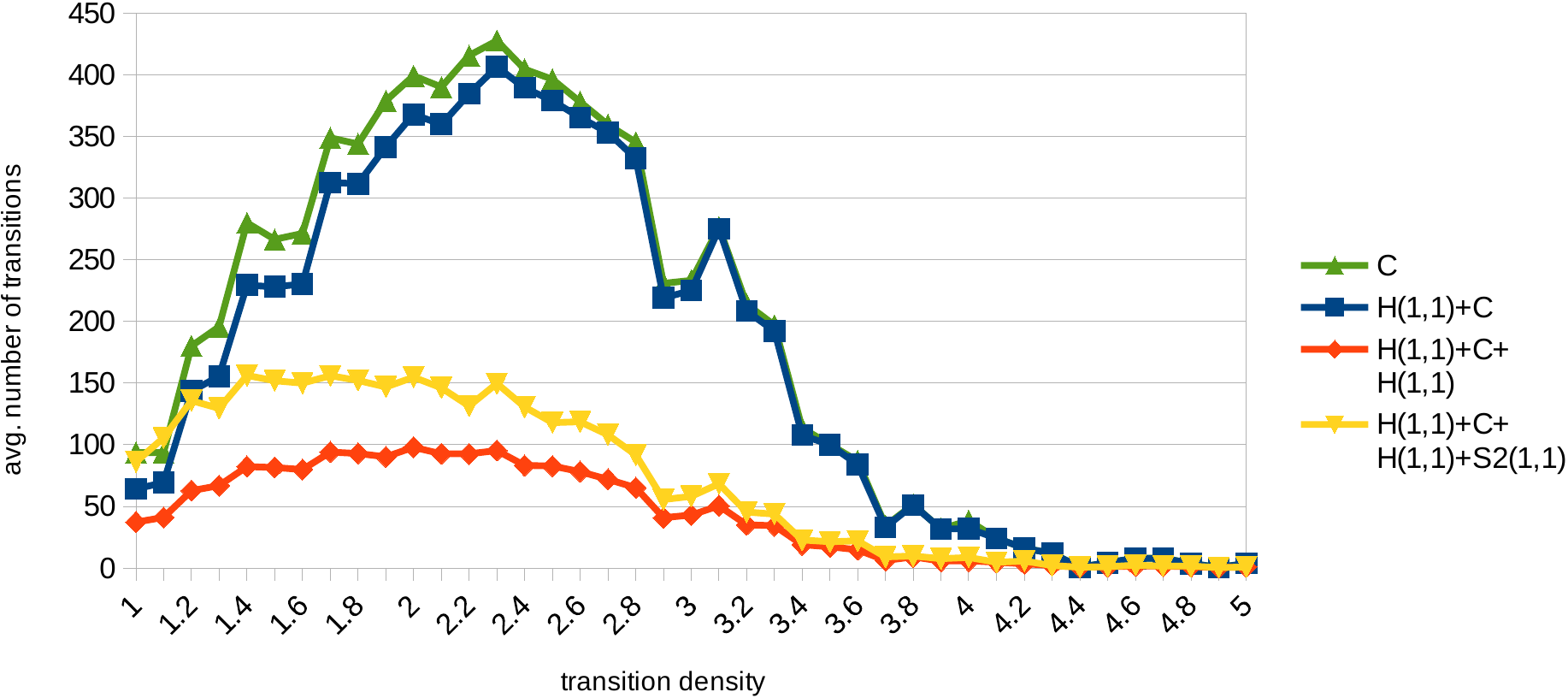} 
\includegraphics[height=6.5cm]{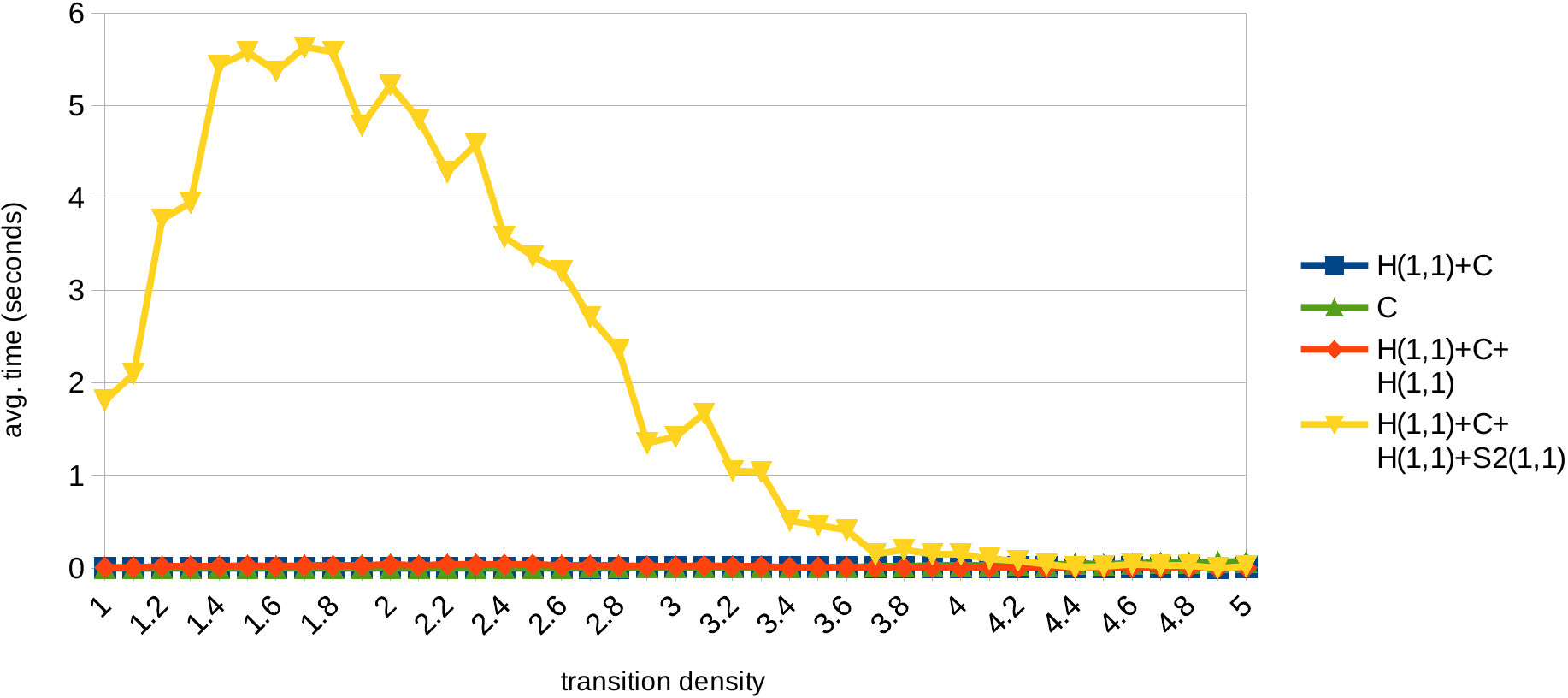} 
\end{center}
\caption{Reducing and complementing Tabakov-Vardi random tree automata with 4 states.
Each data point is the average of $300$ automata.
In general, 
applying Heavy(1,1) before the complementation (H(1,1)+C) yielded 
automata with fewer states and transitions, on average, than direct complementation (C).
When Heavy is used both before and after the complementation,
the difference is even more significant:
H(1,1)+C+H(1,1) produced automata with less than $1/3$ of the transitions of C for nearly all values of $td$.
Running Heavy followed by Sat2 after the complementation (H(1,1)+C+H(1,1)+S2(1,1))
offered a trade-off between reduction in the states space and in the number of transitions (as well as in time).}
\label{fig:chart_compl_q=4_app} 
\end{figure}

\begin{figure}
\begin{center}
\includegraphics[height=6.5cm]{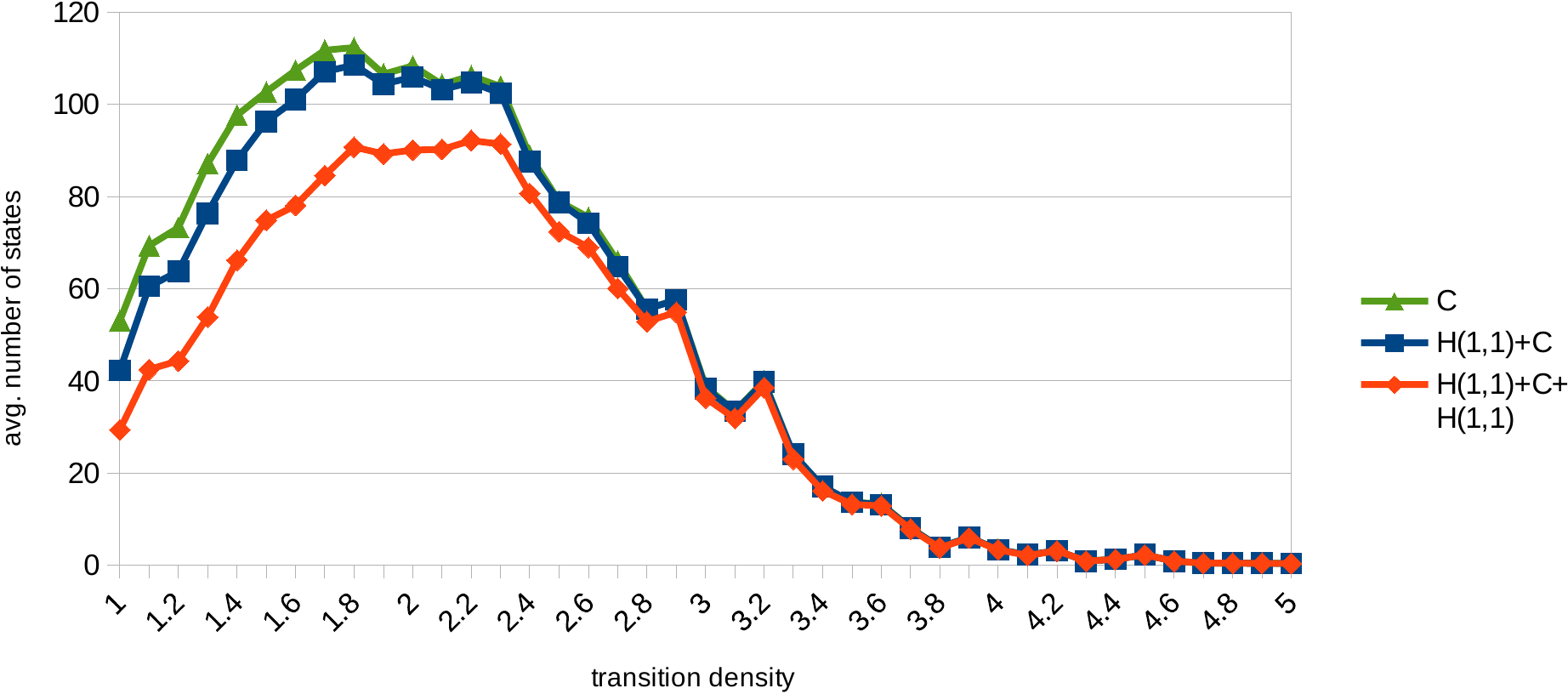} 
\includegraphics[height=6.5cm]{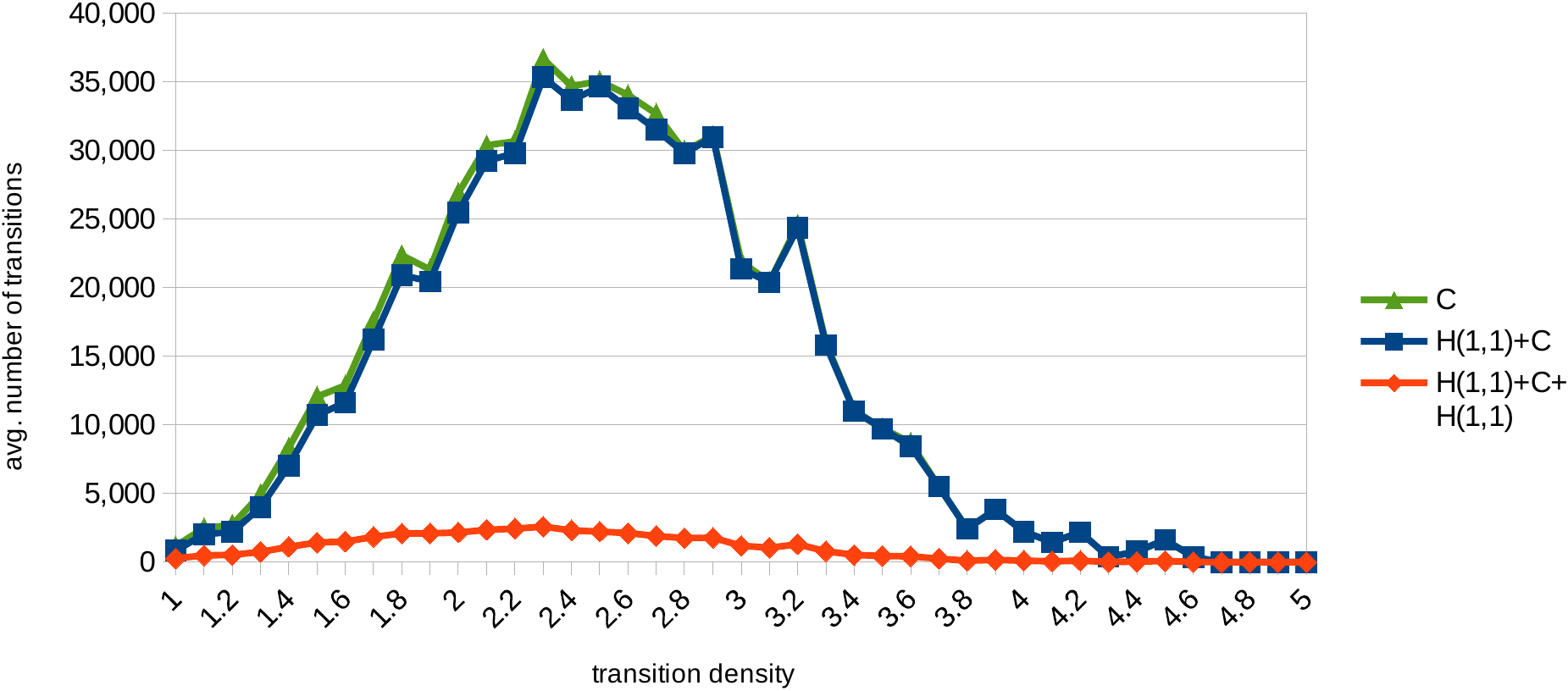} 
\includegraphics[height=6.5cm]{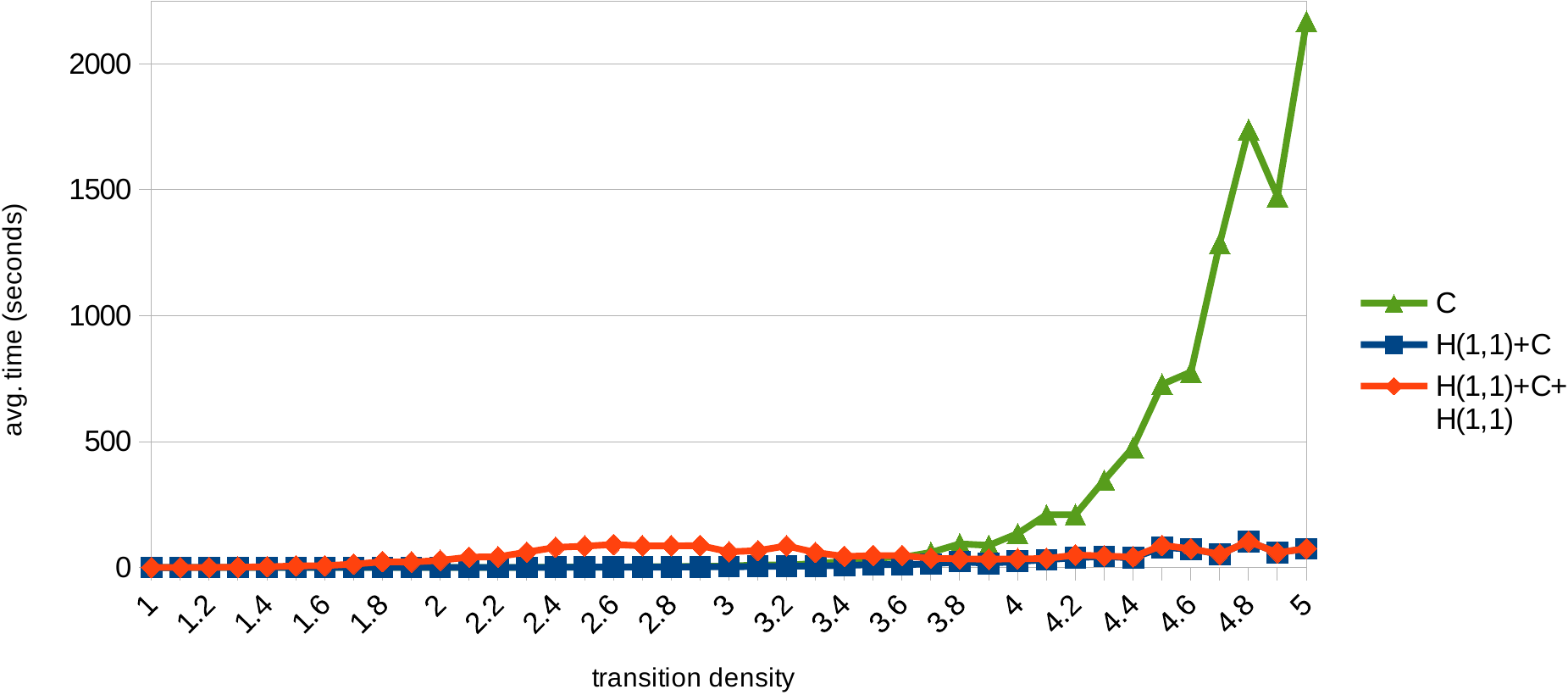} 
\end{center}
\caption{Reducing and complementing Tabakov-Vardi random tree automata with 7 states. 
Each data point is the average of $300$ automata.
In general, applying Heavy(1,1) before the complementation (H(1,1)+C)
yields smaller automata than direct complementation (C), on average.
When Heavy is used both before and after the complementation (H(1,1)+C+H(1,1)),
the difference is even more significant:
the automata produced by H(1,1)+C+H(1,1) had between 4 and 24 times less transitions than
those yielded by C,
but the greater reductions took longer to compute.
C still took the longest times recorded, for highly dense automata.}
\label{fig:chart_compl_q=7_app}
\end{figure}

\end{document}